\newcommand{\one}{1}
\newcommand{\two}{2}
\newcommand{\Act}{Act}
\renewcommand{\val}{\ell}
\renewcommand{\egdef}{:=}
\newcommand{\pow}{\widehat}
\newcommand{\power}[2]{{\mbox{Power}(#1,#2)}}
\newcommand{\Tdown}{T\!\!\downarrow}
\newcommand{\Tup}{T\!\!\uparrow}
\newcommand{\Last}{Last}
\newcommand{\Set}{S}
\newcommand{\Setaux}[1]{S_{#1}}
\def\FUS{\text{\sc FUS}}
\newcommand{\fus}[1]{\FUS_{#1}}
\newcommand{\itexp}[2]{{\mbox{exp}^{#1}(#2)}}
\newcommand{\Output}{Output}
\newcommand{\eqmodels}{\models}
\thanks{bastien.maubert@irisa.fr}
  \thanks[sfn]{Team S4, Universit\'e de Rennes 1 / IRISA, Rennes, France}
\thanks{sophie.pinchinat@irisa.fr}
\begin{document}
\RRNo{8144}
\makeRR   

\newtheorem{theorem}{Theorem}
\newtheorem{lemma}[theorem]{Lemma}
\newtheorem{property}[theorem]{Property}
\newtheorem{corollary}[theorem]{Corollary}
\newtheorem{proposition}[theorem]{Proposition}
\newtheorem{definition}{Definition}
\newtheorem{remark}{Remark}
\newtheorem{example}{Example}
\newcommand{\theoname}[1]{{\bf (#1)}}
\newcommand{\theocite}[1]{{\bf \cite{#1}}}

\newenvironment{proof}{\trivlist\item[\hskip \labelsep {\bf Proof}\enskip]}%
{\unskip\nobreak\hskip 2em plus 1fil\nobreak%
\fbox{\rule{0ex}{1ex}\hspace{1ex}\rule{0ex}{1ex}}%
\parfillskip=0pt \endtrivlist}

\section{Introduction}


In extensive (finite or infinite duration) games, the arena is
represented as a graph whose vertices denote positions of players and
whose paths denote plays. In this context, a strategy of a player is a
mapping prescribing to this player which next position to select
provided she has to make a choice at this current point of the
play. As mathematical objects, strategies can be seen as infinite trees
those of which are obtained by pruning the infinite unfolding of the
arena according to the selection prescribed by this strategy; outcomes
of a strategy are therefore the branches of the trees.

Strategies of players are not arbitrary in general, since players aim
at achieving some objectives: in classic game theory with
finite-duration plays, the reasonable \emph{rationality assumption}
leads players to play in such a way that they maximize their
pay-off. More recently, (infinite-duration) game models have been
intensively studied for their applications in computer science
\cite{apt2011lectures} and logic \cite{booklncs2500}. First,
infinite-duration games provide a natural abstraction of computing
systems' non-terminating interaction \cite{alur2002alternating} (think of a communication protocol between a
printer and its users, or control systems). Second, infinite-duration
games naturally occur as a tool to handle logical systems for the
specification of non-terminating behaviors, such as for the
propositional $\mu$-calculus \cite{emerson91}, leading to a powerful
theory of automata, logics and infinite games \cite{booklncs2500} and
to the development of algorithms for the automatic verification
(``model-checking'') and synthesis of hardware and software systems.

Additionally, the cross fertilization of multi-agent systems and
distributed systems theories has led to equip logical systems with
additional modalities, such as epistemic ones, to capture uncertainty
\cite{sato1977study,lehmann1984knowledge,fagin1991model,parikh1985distributed,ladner1986logic,halpern1989complexity}
, and more recently, these logical systems have been adapted to game
models in order to reason about knowledge, time and strategies
\cite{van2003cooperation,jamroga2004agents,dima2010model}. The whole
picture then becomes intricate, mainly because time and knowledge are
essentially orthogonal, yielding a complex theoretical universe to
reason about. In order to understand to which extent knowledge and
time are orthogonal, the angle of view where strategies are infinite
trees is helpful: Time is about the \emph{vertical} dimension of the
trees as it relates to the ordering of encountered positions along
plays (branches) and to the branching in the tree. On the contrary,
Knowledge is about the \emph{horizontal} dimension, as it relates
plays carrying, e.g., the same information.

As far as we know, this horizontal dimension, although extensively
studied when interpreted as knowledge or observation \cite{arnold02b,
  van2003cooperation,jamroga2004agents,benthem2005epistemic,pinchinat05b,chatterjee2006algorithms,alur2007model,dima2010model},
has not been addressed in its generality. In this paper, we aim at
providing a unified setting to handle it. We introduce the generic
notion of \emph{uniformity properties} and associated so-called
\emph{uniform strategies} (those satisfying uniformity properties). Some
notions of ``uniform'' strategies have already been used, e.g., in the
setting of strategic logics
\cite{van2001games,benthem2005epistemic,jamroga2004agents} and in the
evaluation game of Dependence Logic \cite{vaananen2007dependence},
which both fall into the general framework we present here.

We have chosen to tell our story in a simple framework where games are
described by two-player turn-based arenas in which all information is
put inside the positions, and not on the edges. However, the entire
theory can be adapted to more sophisticated models, \eg with labels on
edges, multi-players, concurrent games, \ldots Additionally, although
uniformity properties can be described in a set-theoretic framework, we
have chosen to use a logical formalism which can be exploited to
address fundamental automated techniques such as the verification of
uniformity properties and the synthesis of uniform strategies --
arbitrary uniformity properties are in general hopeless for
automation. The formalism we use combines the Linear-time Temporal
Logic $LTL$ \cite{gabbay80} and a new modality $\R$ (for ``for all
related plays''), the semantics of which is given by a binary relation
between plays. Modality $\R$  generalizes the knowledge operator $K$ of
\cite{halpern1989complexity} for the epistemic relations of agents in
Interpreted Systems. The semantic binary relations between plays are
very little constrained: they are not necessarily equivalences, to
capture, \eg plausibility (pre)orders one finds in doxastic logic
\cite{hintikka1962knowledge}, neither are they knowledge-based, to
capture particular strategies in games where epistemic
aspects are irrelevant. Formulas of the logic are interpreted over
outcomes of a strategy. The $\R$ modality allows to universally
quantify over all plays that are in relation with the current
play. Distinguishing between the universal quantification over all
plays in the game and the universal quantification over all the
outcomes in the strategy tree yields two kinds of uniform strategies:
the \emph{fully-uniform strategies} and the \emph{strictly-uniform
  strategies}.

As announced earlier, we illustrate the suitability of our notions by
borrowing many frameworks from the literature: strategies for games
with imperfect information,
games with opacity conditions, the non-interference properties of
computing systems, diagnosability of discrete-event systems (with a
proposal for a formal definition of \emph{prognosis}), and finally the
evaluation game  for Dependence Logic. Proofs of
Section~\ref{sec-literature} are omitted due to lack of space, but they are quite simple. Through these examples we
show that both notions, strict uniformity and full uniformity, are
relevant and incomparable. These examples
also demonstrate that
defining uniformity properties with our formal language is 
convenient and intuitive. There are even more
instances of uniform strategies in the literature, but the numerous
examples we give here are already convincing enough to justify the
relevance of the notion.

Next we turn to the automated synthesis of uniform strategies. For
this purpose, we unsurprisingly restrict to finite arenas and to binary relations between
plays that are finitely representable: we use \emph{finite state
  transducers} \cite{berstel1979transductions}, an adequate device to
characterize a large class of binary relations between sequences of
symbols, hence they can be used to relate sequences of positions, \ie
plays. Incidentally, all binary relations that are involved in the
relevant literature seem to follow this restriction. In this context,
we address the problem of \emph{the existence of a fully-uniform
  strategy}: ``given a finite arena, a finite state transducer
describing a binary relation between plays, and a formula expressing a
uniformity property, does there exist a fully-uniform strategy for Player
1?''. We prove that this problem is decidable by designing an
algorithm. This algorithm involves a non-trivial powerset construction
from the arena and the finite state transducer. This construction
needs being iterated a number of times that matches the maximum number
of nested $\R$ modalities in the formula specifying the uniformity
property. As each powerset construction is computed in exponential
time, the overall procedure is non-elementary.  Regarding the decision
problem for the existence
of a strictly-uniform strategy, its decidability is an open problem.\\

The paper is organized as follows. in
Section~\ref{sec-uniformproperties} we set the mathematical framework:
we introduce the notion of uniform strategies and we present the
formal language to specify uniformity properties. Next, in
Section~\ref{sec-literature}, we expose a significant set of six
instances of uniform strategy problems from the
literature. Section~\ref{sec-transducer} is dedicated to a short reminder
about finite state transducers that are used in the strategy synthesis problem addressed
and solved in Section~\ref{sec-decision}. We finish by a discussion on
the work done and perspectives in Section~\ref{sec-discussion}.

\section{Uniform properties} 
\label{sec-uniformproperties}

In this section we define a very general notion of uniform strategies.
We consider two-player turn-based games that are played on graphs with
vertices labelled with propositions.
These propositions represent the relevant information for the uniformity properties one
wants to state. If the game models a dynamic system interacting with its
environment, relevant information can be the value of some (Boolean)
variables. If it models agents interacting in a network, interesting
information can be the state of the communication channels. In games with imperfect
information, it can be what action has just been played. \\

From now on and for the rest of the paper, we let $\AP$ be an
infinite set of \emph{atomic propositions}.\\

An \emph{arena} is a structure $\ga=(V,E,v_I,\val)$ where $V=V_1\uplus
V_2$ is the set of \emph{positions}, partitioned between
positions of Player 1 ($V_1$) and those of Player 2 ($V_2$),
$E\subseteq (V_1\times V_2) \cup (V_2\times V_1)$ is the set of
\emph{edges}, $v_I\in V$ is the \emph{initial position} and
$\val:V\rightarrow
 \parti{\AP}$ is a \emph{valuation function}, mapping each position to the finite
 set of atomic propositions that hold in this position. 

 For $v\in V$, $\TracesInf(v)\subseteq vV^\omega$ is the set of
 \emph{infinite traces} starting in $v$, \ie the set of infinite paths $v_0v_1v_2\ldots$
  in the game graph $(V,E)$, with $v_0=v$, and similarly $\TracesFin(v)\subseteq vV^*$ is the set of
 \emph{finite traces} starting in $v$, \ie the set of finite paths
 $v_0v_1\ldots v_n$  in $(V,E)$, with $v_0=v$ and $n\geq 0$.
 We let $\TracesInf = \cup_{v\in V}\TracesInf(v)$ and $\TracesFin =
 \cup_{v\in V}\TracesFin(v)$.
 Typical elements of $\TracesInf$ are
 $\pi,\pi'$, and $\lambda,\lambda'$ are typical elements of $\TracesFin$.
 $\PlaysInf$ denotes $\TracesInf(v_I)$ and $\PlaysFin$ denotes
 $\TracesFin(v_I)$, respectively the set of infinite and finite plays
 in the game. We shall write $\rho$ instead of $\lambda$ to
 distinguish finite plays from other finite traces.

For an infinite trace $\pi=v_0v_1\ldots$ and $i,j\in
 \mathbb N$, $\pi[i]:=v_i$, $\pi[i,\infty]:=v_iv_{i+1}\ldots \in
 \TracesInf$ and $\pi[i,j]:=v_i\ldots v_j \in \TracesFin$. We
 will use similar notations for finite traces, and $|.| : \TracesFin
 \cup \TracesInf \rightarrow \setn \cup \{\omega\}$ denotes the length of the trace. If $\lambda \in
 \TracesFin$, we let $last(\lambda):=\lambda[|\lambda|-1]$ be the last position of
 $\lambda$. 

A \emph{strategy} for Player $k$, $k\in\{1,2\}$, is a partial function $\sigma
: \PlaysFin \rightarrow V$   that assigns the next position to choose in
every situation in which it is Player $k$'s turn to play. In other words $\sigma(\rho)$
is defined if $last(\rho) \in V_k$.  Let $\sigma$ be a strategy for
Player $k$. We say that  a play $\pi\in\PlaysInf$ 
is \emph{induced by} $\sigma$ if for all $i\geq 0$ such that
$\pi[i]\in V_1$, $\pi[i+1]=\sigma(\pi[0,i])$, and the \emph{outcome
  of} $\sigma$, noted $\out(\sigma)\subseteq \Plays_\omega$, is the
set of all infinite plays that are induced by $\sigma$. 

We want to express properties of strategies that do not concern only 
single traces but rather  sets of correlated traces. We first give a very
abstract definition.

\begin{definition}
  Let $\mathcal G$ be an arena. A \emph{uniformity property} $U\subseteq \mathcal P(\PlaysInf)$ is a
  set of sets of plays in $\mathcal G$.
\end{definition}

\begin{definition}
Let $\mathcal G$ be an arena and $U$ a uniformity property. A strategy
$\sigma$ is $U$\emph{-uniform} if $\out(\sigma)\in U$.
\end{definition}

This definition gives an idea of the notion we want to capture, but
first this set-theoretic definition is not very intuitive, and moreover it
is so expressive that automatically handling this notion in its
generality is hopeless. 
We therefore restrict the notion of uniform strategy by fixing a formal language to specify
uniformity properties. As demonstrated in the next section, the language is powerful enough
to capture plethora of instances from the literature.



\label{sec-restricted}


The proposed language enables to express properties of the dynamics of plays,
and resembles the Linear Temporal Logic ($LTL$)
\cite{gabbay80}. However, while $LTL$ formulas are evaluated on
individual plays (paths), we want here to express properties on
``bundles'' of plays. To this aim, we equip arenas with a
binary relation between finite plays, and we enrich the logic with a
modality $\R$ that quantifies over related plays: the intended
meaning of $\R\phi$ is that $\phi$ holds
in every related play.\\

For the general presentation of the logic, we do not make yet
assumptions concerning the binary relation over plays, as opposed to
Section~\ref{sec-decision} dedicated to decidability issues.\\

We now give
the syntax and semantics of the language $\lang$.

\subsection{Syntax}
\label{sec-syntax}
   
The syntax of the language $\lang$ is the following :

\[\lang :\;\; \varphi,\psi ::= p\mid \neg \varphi \mid \varphi\wedge \psi  \mid \fullmoon \varphi \mid \varphi\until\psi
  \mid \R \varphi\]

where $p$ is in $\AP$. As usual we will use the following notations :
$true:=p\vee \neg p$, $false:= \neg true$, $ \F\varphi := true \until \varphi$,  $\G\varphi:= \neg \F\neg \varphi$, and
$\varphi \mathcal W \psi := \varphi \until \psi \vee \G \varphi $.
In addition we will use the following notation: $\dR\varphi := \neg \R
\neg \varphi$, and  for a formula $\phi\in\lang$,  $\subf(\phi)$ denotes the set of
all its subformulas.

The syntax of $\lang$ is similar to that of linear temporal logic with
knowledge \cite{halpern1989complexity}.
However, we use $\R$ instead of the usual knowledge operator $K$ to emphasize that though it has a strong epistemic flavour, notably
in various application instances we present here, it need not be interpreted in terms of knowledge in general, but merely
as a way to state properties of bundles of plays. 

\begin{definition}
\label{def-depth}
For a formula $\phi\in\lang$, we define the \emph{$\R$-depth} of $\phi$, denoted $\depth(\phi)$, as the maximum number of nested $\R$ modalities in
$\phi$. For $n\geq 0$, let $\langn{n}=\{\phi\in\lang \mid
\depth(\phi)=n\}$ be the set of formulas of $R$-depth $n$.
\end{definition}

We note $\ltl$ the language $\langn{n}$ as it matches the syntax (and
also the semantics) of the Linear-time Temporal Logic of \cite{gabbay80}.


\subsection{Semantics}
\label{sec-semantics}

To give the semantics of $\lang$ we take an arena $\ga=(V,E,v_I,\val)$ and a  relation $\leadsto\;\subseteq
\PlaysFin\times\PlaysFin$. 
A formula $\phi$ of $\lang$ is evaluated at some point $i\in\mathbb N$ of a trace $\pi\in\PlaysInf$,
within a \emph{universe} $\Pi\subseteq \PlaysInf$.
The semantics is given by induction over formulas. 
 \[\begin{array}{lcl}
  \Pi,\pi,i\eqmodels p & \mbox{ if } & p\in \val(\pi[i]) \\
  \Pi,\pi,i\eqmodels \neg \varphi & \mbox{ if } & \Pi,\pi,i\not\eqmodels \varphi \\
  \Pi,\pi,i\eqmodels \varphi \wedge \psi & \mbox{ if }  & \Pi,\pi,i \eqmodels \varphi \mbox{ and } \Pi,\pi,i\eqmodels \psi \\
  \Pi,\pi,i\eqmodels \fullmoon \varphi & \mbox{ if } & \Pi,\pi,i+1 \eqmodels \varphi \\
  \Pi,\pi,i\eqmodels \varphi \until \psi & \mbox{ if } & \mbox{there is }j\geq i \mbox{ such that }\Pi,\pi,j\eqmodels \psi \mbox{ and for all }i\leq k < j,\; \Pi,\pi,k \eqmodels \varphi \\
  \Pi,\pi,i\eqmodels \R \varphi & \mbox{ if } & \mbox{for all }\pi'\in\Pi,j\in \mathbb N \mbox{ such that }\pi[0,i]\leadsto\pi'[0,j]
  ,\;\Pi,\pi',j\eqmodels \varphi  
\end{array}\]

The $LTL$ part is classic. $\R\varphi$ is true at some point of a trace
if $\varphi$ is true in every related finite trace in the universe.

We will sometimes need to evaluate an $\ltl$-formula $\phi$ in a
position $v$ of
an arena, with the classic semantics that $\phi$ holds in $v$ if it holds in every
trace starting in $v$.

Formally, for an arena $\ga=(V,E,v_I,\val)$, a position $v\in V$ and a
formula $\phi\in\ltl$, we write

\[\ga,v\models \phi \mbox{ if }\pi,0\models\phi \mbox{ for all }\pi\in\TracesInf(v)\] 

Here we can omit $\Pi$ because the formula has no $\R$ modality.
  

\begin{definition}
\label{def-unifprop}
Given an arena $\ga=(V,E,v_I,l)$, a \emph{uniformity property} is a
pair $(\leadsto,\phi)$ where $\leadsto$ is a relation  over
$\PlaysFin$ and $\varphi \in \lang$ is a formula.
\end{definition}

Now we define two notions of uniform strategies, which differ only in
the universe the  $\R$ modality quantifies over: $\out(\sigma)$ or $\PlaysInf$ 
 (with the latter,
related plays not induced by the strategy
also count). As we shall see in the
examples of the next section, making a nuance is worthwhile.  
\begin{definition}
\label{def-unifstrat}
  Let $\ga$ be an arena and $(\leadsto,\varphi)$ be
  a uniformity property. A strategy $\sigma$ for Player 1 is 
\begin{itemize}
\item $(\leadsto,\varphi)$-\emph{strictly-uniform} if
for all $\pi\in \out(\sigma)$, $\out(\sigma),\pi,0\eqmodels \varphi$.
\item $(\leadsto,\varphi)$-\emph{fully-uniform} if
for all $\pi\in \out(\sigma)$, $\PlaysFin,\pi,0\eqmodels \varphi$.
\end{itemize}
\end{definition}

The notion of fully-uniform strategy is in a sense weaker than the
strictly-uniform one. Indeed, the fact that a particular play in an
arena verifies a formula $\phi$ in the fully-uniform semantics, \ie
with $\PlaysInf$ and not $\out(\sigma)$ as a universe, is independent
of the strategy. Hence in the general definition of uniform
strategies, a uniformity property $U$ could be defined to be a set of
plays instead of a set of sets of plays, and a strategy $\sigma$ could
be said to be uniform if $\out(\sigma)\subseteq U$ instead of
$\out(\sigma)\in U$, it would still contain the notion of fully-uniform strategies. Strictly-uniform strategies could no longer fit in
this definition though, as deciding whether a play verifies a formula
in this semantics cannot be done without knowing the strategy. In this
sense, strict uniformity is ``stronger'' than full
uniformity. However, the notion of fully-uniform strategies still
enables to characterize tree languages that are not even
$\mu$-calculus definable: indeed, as observed by
\cite{icalp06refinement}, given a infinite tree, Property (*) that
at every depth $d>0$, there exist two nodes, one of which being
labeled by, say $p$, and one of which not being labeled by $p$, cannot
be $\omega$-regular (a pumpimg lemma argument suffices).  Hence this
property cannot be defined by any $\mu$-calculus formula. However, we
are able to characterize arenas whose tree unfolding has this
property: consider the equivalence relation $=_{length}$ which relates
plays with equal length. One easily sees that the existence of a $(=_{length},\G (\dR p
\wedge \dR \neg p))$-fully-uniform strategy is equivalent to say that the tree
unfolding of the arena has Property (*).\\

Remark also that the relation $\leadsto$  plays no role
in Definition~\ref{def-unifstrat} if $\varphi$ does not contain any
$\R$ operator, hence it is a mere $LTL$ formula. Notice
that in this latter case, some standard $\omega$-regular (winning)
conditions can be expressed over plays. The extension to a more
powerful logic, such as the full propositional $\mu$-calculus, in
order to capture all $\omega$-regular properties is {\it a priori} possible. However, for the examples considered in Section~\ref{sec-literature}
this full power is not needed.

\section{Frameworks from the literature}
\label{sec-literature}
In this section we demonstrate that the notion of uniform
strategy of previous Section~\ref{sec-restricted}  enables to embed various
problems from the literature, and in particular that it subsumes two existing notions
of uniform strategies.


\subsection{Games with imperfect information}
\label{sec-K-based}

Games with imperfect information, in general, are games
in which some of the players do not know exactly what is the current
position of the game. This can occur in real games, \eg
poker since one does not know what cards her opponents have in hands,
but also in situations arising from computer science, like for example
a program that observes or controls a system by means of a sub-part of
its variables, the interface, while other variables remain hidden. 
One important aspect of imperfect-information games is that not every
strategy is ``playable''.
Indeed, a player who has imperfect information cannot follow a strategy in which
 different moves are chosen for situations that are indistinguishable
 to her. This is why strategies are required to choose
moves uniformly over observationally equivalent situations. This kind
of strategies is sometimes called \emph{uniform strategies}
 in the
community of strategic logics (\cite{van2001games,benthem2005epistemic,jamroga2004agents}), or \emph{observation-based strategies}
 in the community of computer-science oriented
game theory (\cite{chatterjee2006algorithms}).

In
games with imperfect information, the player's ability to remember
what happened so far along a play is a key point to achieve a winning
strategy, as opposed to \eg  perfect-information parity
games, where memoryless strategies are sufficient.  
Moreover in an imperfect information configuration, it is
necessary to define what situations are indistinguishable to the
player, and this requires defining how much memory she has of what
occurs in a play.
  It is therefore
relevant under an imperfect information assumption to distinguish the
\emph{perfect recall} setting, as opposed to the \emph{imperfect
  recall} one. In the former, the player remembers the whole history of the
observation she had of a play, no matter how long it is, while in the
latter the player forgets a part of the information. An agent with
imperfect information can either be memoryless, \ie he does not
remember anything and takes his decisions only based on the current position, or have a bounded
memory, but this case can be reduced to the memoryless case by putting
the different possible configurations of his memory in the positions
of the arena.


While games with imperfect information and perfect recall have been
studied intensively \cite{reif84,chatterjee2006algorithms,berwanger2008power}, the case
of imperfect recall has received much less attention since paradoxes
concerning the interpretation of such games were raised
\cite{piccione1997absent}.  Nonetheless, relevant problems may be
modeled with imperfect recall: typically, particular computing
resources have very limited memory and cannot remember arbitrarily
long histories. 


In this subsection, we  show that the notion of ``uniform'' or
``observation-based'' strategy can be easily embedded in our notion of
uniform strategy, and this no matter the assumption made on the memory
of the player.

We first consider two-player imperfect-information games as studied
for example in
\cite{reif84,chatterjee2006algorithms,berwanger2008power}. In these games, Player
$\one$ only partially observes the positions of the game, such that
some positions are indistinguishable to her, while Player $\two$ has
perfect information (the asymmetry is due to the focus being on the
existence of strategies for Player $\one$). Arenas are labelled
directed graphs together with a finite set of \emph{actions} $\Act$, and in each round, if
the position is a node $v$, Player $\one$ chooses an available action $a$, and
Player $\two$ chooses a next position $v'$ reachable from $v$ through
an $a$-labelled edge.

We equivalently define this framework in a manner that fits our
setting by putting Player $\one$'s actions inside the positions. We
have two kinds of positions, of the form $v$ and of the form
$(v,a)$. In a position $v$, when she chooses an action $a$, Player
$\one$ actually moves to position $(v,a)$, then Player $\two$ moves
from $(v,a)$ to some $v'$.  So an imperfect-information game arena is
a structure $\mathcal G_{imp}=(\mathcal G,\sim)$ where $\mathcal
G=(V,E,v_I,\val)$ is a two-player game arena with positions
in $V_1$ of the form $v$ and positions in $V_2$ of the form $(v,a)$.
For a position $(v,a)\in V_2$,  we note $(v,a).act :=a$.
$E\subseteq V_1\times V_2\cup V_2\times V_1$, $vE (v',a)$ implies
$v=v'$, $v_I\in V_1$. We assume that $p_1\in\AP$ and for every
 action $a$ in $\Act$,, $p_a\in\AP$. $p_1$ holds in positions belonging
to Player~1, and $p_a$ holds in positions of Player~2 where the last
action chosen by Player~1 is $a$: $\val(v)=\{p_1\}$ for $v\in V_1$, $\val(v,a)=\{p_a\}$ for
$(v,a)\in V_2$. Finally, $\sim\;\subseteq V_1^2$ is an observational
equivalence relation on positions, that relates indistinguishable
positions for Player $\one$. We define its extension to finite plays as the
least relation $\approx$ such that $\rho \cdot v \approx \rho'\cdot v'$ whenever $\rho
\approx \rho'$ and $v \sim v'$, and $\rho \cdot (v,a) \approx \rho' \cdot (v',a')$
whenever $\rho\approx\rho'$, $v\sim v'$ and $a=a'$.

We add the classic
requirement that the same actions must be available in indistinguishable positions: for all
$v,v'\in V_1$, if $v\sim v'$ then $vE (v,a)$ if, and only if,
$v'E (v',a)$. In other words, if a player has different options, she
can distinguish the positions.


\begin{definition} 
A strategy $\sigma$ for Player $\one$ is \emph{observation-based} if 
for all $\rho, \rho' \in v(V_2V_1)^*$, $\rho\approx\rho'$  implies $\sigma(\rho).act=\sigma(\rho').act$.
\end{definition}

 We define the formula
\[\text{\tt
  SameAct}:=\G(p_1\rightarrow\bigvee\limits_{a\in\Act}\!\!\!\R \fullmoon
p_a)\]
which expresses that whenever it is Player $\one$'s turn to play, there is an action $a$ that is played
in all  equivalent finite play.

\begin{theorem}
A strategy $\sigma$ for Player $\one$ is observation-based if, and only if, it is $(\approx,\text{\tt SameAct})$-strictly-uniform.
\end{theorem}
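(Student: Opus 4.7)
The plan is to unfold the two definitions and establish the two implications separately, the nontrivial ingredient being the fact that any $\sigma$-consistent history ending in $V_1$ can be extended into an outcome of $\sigma$ (by continuing with $\sigma$ at Player~$\one$ positions and with arbitrary successors at Player~$\two$ positions, which is legal under the standard assumption that the graph has no dead ends).

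For the forward direction, I fix $\pi \in \out(\sigma)$ and an index $i$ with $p_1 \in \val(\pi[i])$, so that $\pi[i] \in V_1$. Since $\pi$ is induced by $\sigma$, the position $\pi[i+1]$ equals $\sigma(\pi[0,i]) = (\pi[i], a)$ with $a := \sigma(\pi[0,i]).act$, and hence $p_a \in \val(\pi[i+1])$. For any $\pi' \in \out(\sigma)$ and $j$ with $\pi[0,i] \approx \pi'[0,j]$, observation-basedness yields $\sigma(\pi'[0,j]).act = a$, so $\pi'[j+1] = (\pi'[j], a)$ satisfies $p_a$ as well. This is precisely $\out(\sigma),\pi,i \eqmodels \R\fullmoon p_a$, and collecting over all such $i$ gives $\out(\sigma),\pi,0 \eqmodels \text{\tt SameAct}$.

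For the converse, I take two $\sigma$-consistent finite plays $\rho_1 \approx \rho_2$ both ending in $V_1$ and extend each into an outcome $\pi_k \in \out(\sigma)$. Applying $(\approx,\text{\tt SameAct})$-strict uniformity at $\pi_1$ and index $i = |\rho_1|-1$ yields an action $a$ such that, for every $\pi' \in \out(\sigma)$ and every $j$ with $\pi_1[0,i] \approx \pi'[0,j]$, the proposition $p_a$ holds at $\pi'[j+1]$. Instantiating with $\pi' = \pi_1$ and using the valuation convention $\val(v,a') = \{p_{a'}\}$ forces $a = \sigma(\rho_1).act$, while instantiating with $\pi' = \pi_2$ and $j = |\rho_2|-1$ similarly forces $a = \sigma(\rho_2).act$; the two actions therefore agree.

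The main delicate point is the apparent mismatch between quantification scopes: observation-basedness speaks of all finite plays ending in $V_1$, whereas strict uniformity only constrains outcomes of $\sigma$. This gap is benign because the behaviour of $\sigma$ on $\sigma$-inconsistent histories is irrelevant to $\out(\sigma)$, so it suffices to establish the observation-based property on consistent histories, each of which admits a completion into an element of $\out(\sigma)$. With this reduction the two definitions translate into exactly the same constraint on the action selected after $\approx$-equivalent consistent histories.
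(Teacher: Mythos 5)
The paper itself gives no proof of this theorem (Section~3 proofs are explicitly omitted ``due to lack of space''), so there is nothing to compare against line by line; judged on its own, your proof is correct and is clearly the straightforward definition-unfolding argument the authors had in mind. Both directions check out: in the forward direction the length-preservation of $\approx$ and the singleton valuations $\val(v,a)=\{p_a\}$ make the instantiation of $\R\fullmoon p_a$ work exactly as you state, and in the converse the two instantiations ($\pi'=\pi_1$ to pin down $a=\sigma(\rho_1).act$ via reflexivity of $\approx$, then $\pi'=\pi_2$) are the right moves. The subtlety you flag is genuine and is in fact a looseness in the theorem statement rather than in your proof: since strategies here are defined on \emph{all} histories ending in $V_1$, a strategy can be $(\approx,\text{\tt SameAct})$-strictly-uniform while choosing different actions on two $\approx$-equivalent histories that are both $\sigma$-inconsistent (e.g.\ histories whose first action differs from $\sigma$'s choice at $v_I$, whose entire equivalence class contains no outcome prefix), so the converse holds only up to the behaviour of $\sigma$ on unreachable histories. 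Your repair --- proving the observation-based condition on $\sigma$-consistent histories, each completed to an element of $\out(\sigma)$ under the standard no-dead-end assumption (which the paper also leaves implicit) --- is exactly the intended reading, and one can even note that strict uniformity propagates consistency across $\approx$-classes, so the only unconstrained classes are those containing no consistent history at all.
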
 

Here we have to make use of the notion of strict uniformity, and not the
full uniformity. Indeed, after a finite play $\pi[0,i]$ ending in $V_1$,
we want to enforce that the actions in the
next positions of equivalent plays are the same only in those
plays that are induced by the strategy we consider, and not in every
possible play in the game. This would of course not hold as soon as
several choices are possible in $\pi[i]$.

Also, it is interesting to see that this notion could also be embedded
with a simpler formula
 by using a relation that is not an equivalence.
 Define $\leadsto$ as:  $\pi[0,i]\leadsto \pi'[0,j]$ if $\pi[i]\in V_1$, $j=i+1$ and
 $\pi[0,i]\approx\pi'[0,i]$,
 and define the formula: 
\[\text{\tt
  SameAct'}:=\G \bigvee\limits_{a\in\Act}\!\!\!\R
p_a.\]
Then a strategy $\sigma$ for Player $\one$ is observation-based if, and
only if, it is $(\leadsto,\text{\tt SameAct'})$-strictly-uniform.

In this version, the relation is not reflexive, in particular plays ending in $V_2$ are linked to no
play, making $\R\phi$ trivially true for any $\phi$. This is the
reason why we no longer need  to mark positions of $V_1$ with
the proposition $p_1$ and
test whether we are in $V_1$ before we ask for some $p_a$ to hold in
every reachable position.

Finally, notice that to embed the case of imperfect-recall for example, one
would just have to replace $\approx$ with the appropriate relation.




\subsection{Games with epistemic condition}
\label{sec-opacity}

Uniform strategies enable to express winning conditions that have epistemic features, the relevance of which is exemplified by the 
\emph{games with opacity condition} studied in \cite{maubert2011opacity}. In that
case, $\R$ can represent a players' knowledge, or distributed
knowledge between a group of players, or common knowledge, depending
on the winning condition one wants to define.


Games with opacity condition are based on two-player
imperfect-information arenas with a particular winning condition,
called the \emph{opacity condition}, which relies on the knowledge of
the player with imperfect information. In such games, some positions
are ``secret'' as they reveal a critical information that the
imperfect-information player wants to know (in the epistemic sense).

More formally, assume that a proposition $p_S\in \AP$ represents the secret. Let $\mathcal G_{inf}=(\mathcal G,\sim)$ be an
imperfect-information arena as described in Section~\ref{sec-K-based},
with a distinguished set of positions $S\subseteq V_1$ that denotes
the
secret. 
Let $\mathcal G=(V,E,v_I,\val)$ be the arena with
$\val^{-1}(\{p_S\})=S$ (positions labeled by $p_S$ are exactly positions $v \in S$).  The \emph{opacity winning condition} is as
follows.

The \emph{knowledge} or \emph{information set} of Player
$\one$ after a finite play is the set of positions that she considers
possible according to the observation she has:
let $\rho\in\PlaysFin$ be a finite play with $last(\rho) \in
  V_1$. The \emph{knowledge} or \emph{information set} of Player
  $\one$ after $\rho$ is $I(\rho):=\{last(\rho')\mid \rho'\in\PlaysFin,
  \rho\approx\rho'\}$.

An infinite play is winning for Player $\one$ if there exists a finite prefix $\rho$ of this play
whose information set is contained in $S$, \ie $I(\rho)\subseteq S$, otherwise Player $\two$ wins.
Again, strategies for Player $\one$ are required to be observation-based.
It can easily be shown that:
\begin{theorem}
\label{theo-opacity}
A strategy $\sigma$ for Player $\one$ is winning if, and only if,
$\sigma$ is $(\approx,\F \R p_S)$-strictly-uniform.\\
\noindent A strategy $\sigma$ for Player $\two$ is winning if, and only if, $\sigma$ is $(\approx,\G \neg \R p_S)$-fully-uniform.
\end{theorem}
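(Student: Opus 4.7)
The plan is to handle the two equivalences separately, tied together by a preliminary observation: $\approx$ preserves length, by a trivial induction on its inductive definition. Hence the existential on $j$ in the semantic clause for $\R p_S$ can be fixed to $j=i$, yielding the readable form ``for every $\pi'$ in the universe with $\pi[0,i]\approx\pi'[0,i]$, $\pi'[i]\in S$''.

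For Player~\one, the crux is the following bridging lemma: if $\sigma$ is observation-based and $\rho\in\PlaysFin$ is $\sigma$-consistent, then every $\rho'\in\PlaysFin$ with $\rho\approx\rho'$ is also $\sigma$-consistent. I would prove it by induction on $|\rho|$; the only substantive case is the extension by a Player~\one move, where the observation-based property exactly guarantees that $\sigma$ selects the same action on $\approx$-equivalent prefixes. Combined with the implicit assumption that every $\sigma$-consistent finite play extends to some $\pi\in\out(\sigma)$, this lemma yields the identity
\[
I(\pi[0,i]) \;=\; \{\pi'[i] \,:\, \pi'\in\out(\sigma),\ \pi[0,i]\approx\pi'[0,i]\}
\]
for every $\pi\in\out(\sigma)$ with $\pi[i]\in V_1$. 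When strict uniformity at index $i$ holds, taking $\pi'=\pi$ forces $\pi[i]\in S\subseteq V_1$, so the side condition is automatic. Unfolding $\F\R p_S$ then gives the first equivalence.

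For Player~\two, no observation-based assumption is needed. Expanding $\G\neg\R p_S$ under the fully-uniform universe $\PlaysInf$ reads: for every $\pi\in\out(\sigma)$ and every $i$, some $\pi'\in\PlaysInf$ satisfies $\pi[0,i]\approx\pi'[0,i]$ and $\pi'[i]\notin S$. The failure of Player~\one's winning condition along $\pi$ reads: for every $i$, some $\rho'\in\PlaysFin$ with $\pi[0,i]\approx\rho'$ satisfies $last(\rho')\notin S$. The two statements coincide via the routine correspondence between a finite play $\rho'$ and any of its infinite extensions $\pi'$ (using the standing assumption that every position has a successor), so that $\rho'=\pi'[0,i]$.

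The main obstacle lies in the bridging lemma of Part~1: $I(\rho)$ quantifies over all of $\PlaysFin$ while strict uniformity restricts to $\out(\sigma)$, and it is precisely the observation-based property of $\sigma$ that reconciles these two quantifications. Part~2, by contrast, is essentially a direct semantic reformulation since full uniformity already ranges over the arena-wide universe.
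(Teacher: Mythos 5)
Your proposal is correct and takes essentially the same approach the paper intends: the paper omits the proof as routine, but the remark it makes immediately after the theorem --- that if $\pi$ is induced by $\sigma$ and $\pi[0,i]\approx\pi'[0,i]$, then $\pi'[0,i]$ is also induced by $\sigma$ --- is precisely your bridging lemma for Player~1, and your treatment of Player~2 as a direct semantic reformulation over the arena-wide universe matches the paper's stated rationale for using full uniformity there. Your preliminary observations (length-preservation of $\approx$, extendability of $\sigma$-consistent finite plays to outcomes, and the automatic handling of the $V_1$ side condition via reflexivity) correctly supply the implicit assumptions the paper relies on.
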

For the second statement of Theorem~\ref{theo-opacity}, we make use of
the notion of full uniformity because we are interested in the
knowledge of Player~$\one$. We consider that she does not know what
strategy Player~$\two$ is playing, hence she may consider possible
some plays that are observationally equivalent to her but not induced
by this strategy. 
 
On the other hand, for the first statement, we use
strict uniformity but could have used full uniformity instead. Indeed, since the
actions chosen by Player~$\one$ are part of what she observes, she
cannot consider possible a finite play that has not followed her
strategy until the point $i$ considered. In
fact, if
$\pi$ is induced by $\sigma$ and $\pi[0,i]\approx \pi'[0,i]$, then $\pi'[0,i]$ is
also induced by $\sigma$. The future of the play may not follow sigma,
but this is not a problem here. Indeed,  the property that we consider on
equivalent plays, $p_S$, does not depend on the future.

Notice that though we chose to illustrate with an example, any winning condition that could be expressed as a formula
of the linear temporal logic with one knowledge operator would fit in our setting.

\subsection{Non-interference}

\emph{Non-interference}, as introduced by \cite{goguen1982security},
is a property evaluated on labelled transition systems which handle
Boolean variables.
Such systems are tuples $(S,\mathcal I,\mathcal
O,\delta,s_I,\Output)$ where $S$ is the set of states, $\mathcal
I=H\uplus L$ is a set of Boolean input variables partitioned into
\emph{high security variables} $H$ and \emph{low security variables}
$L$, $\mathcal O$ is the set of Boolean output variables, $\delta :
S\times 2^{\mathcal I}\rightarrow S$ is the transition function that
maps each pair of state and input variables valuation\footnote{we
  classically confuse valuations over a set B of Boolean variables
  with elements of $2^B$.} to a next state, $s_I$ is the initial state,
and $\Output:S\rightarrow 2^{\mathcal O}$ is the output function that
represents a mapping of states onto valuations of the Boolean output
variables.  We extend the transition function $\delta$ to
$S\times\mathcal (2^{\mathcal I})^*\rightarrow S$ as expected:
$\delta(s,\epsilon)=s$ and $\delta(s,ua)=\delta(\delta(s,u),a)$.

We define the $L$-equivalence, $\sim_L$ over $(2^{\mathcal I})^*$ by
$u\sim_Lu'$ whenever $u$ and $u'$ have the same length and they
coincide on the values of the low security input variables, \ie for all $1 \leq i \leq
length(u)$, for all $l\in L$, $l \in u(i) \equivaut l \in u'(i)$. 
Given an infinite sequence of inputs $w\in(2^{\mathcal
  I})^\omega$, we abuse notation by writing $\Output(w)$ for the infinite
sequence of output variables valuations encountered in the states along
the execution of the system on input $w$.
A system $(S,\mathcal
I,\mathcal O,\delta,s_I,\Output)$ verifies the \emph{non-interference
  property} if for any two finite sequences of inputs $w, w' \in
(2^{\mathcal I})^*$, $w\approx_L w'$ implies $\Output(w)=\Output(w')$. In
other words, the valuations of high security variables have no
consequence on the observation of the system. \\

A first natural problem is to decide the non-interference property of
a system. A second more general problem is a control problem: we want to decide whether there is
a way of restricting the set of input valuations along the executions,
or equivalently to control the environment, so that the system is
non-interfering. By constraining the applied restriction to be
trivial, the former problem is a particular case of the latter.
We can encode the control problem in our setting.

Let $Sys=(S,\mathcal
I,\mathcal O,\delta,s_I,\Output)$ be an instance of the problem, and
write $\Sigma$ for $2^{\mathcal I}$ with typical elements
$a,b,\ldots$ Without loss of generality, we can assume that $Sys$ is
\emph{complete}: every input valuation yields a transition. We define
a two-player game arena that simulates the system, in which Player
$\one$ fixes the \emph{environment}, \ie a subset of the possible inputs in
the current state, and Player $\two$ chooses a particular input among
those.  More formally, let $\mathcal G_{Sys}=(V,E,v_I,\val)$, with
$V=V_1\uplus V_2$, $V_1=(\Sigma\uplus\{\epsilon\})\times S$ and $V_2=S\times 2^{\Sigma}$. A
position $(a,s)\in V_1$ denotes a situation where the system reaches
state $s$ by an $a$-transition, and $(s,A)\in V_2$ denotes a situation
where in state $s$, the set of possible inputs is $A$. The set of
edges $E$ of the arena is the smallest set such that $(a,s)E
(s,A)$ for all $s \in S$, $a \in \Sigma$ and $A \subseteq \Sigma$, and
$(s,A)E (a,\delta(s,a))$ whenever $s\in S$ and $a \in A$.
The initial position of the arena is $v_I=(\epsilon,s_I)$, and
by assuming that $\{p_o\mid o\in 2^{\mathcal O}\}\subseteq \AP$, we
set $\val(a,s)=\val(s,A)=\{p_{\Output(s)}\}$.

By writing $\iota$ for the canonical projection from $V_1 \cup V_2$
 onto $2^{\mathcal I}$ (that is 
$\iota(\epsilon,s_I)=\iota(s,A)=\epsilon$ and $\iota(a,s)=a$) and by extending $\iota$ to finite plays as expected, we let 
$\rho\equiv_L\rho'$ hold whenever $\iota(\rho)\sim_L\iota(\rho')$.
We now  define the formula 
$$\text{\tt SameOutput} := \G \bigwedge\limits_{p_o\in\AP}(p_o\rightarrow \R
p_o)$$
 which captures the property that the valuations of output variables  along
$\equiv_L$-equivalent executions of the system coincide, and we can establish the
following.

\begin{theorem}
  There is a one-to-one correspondence between  $(\equiv_L,\text{\tt
    SameOutput})$-strictly-uniform strategies of Player $\one$ and the controllers which ensure the non-interference property of the system.

  In particular, the trivial strategy of Player $\one$, where from any
  position $(a,s)$ she chooses to move to $(s,\Sigma)$, is
  $(\equiv_L,\text{\tt SameOutput})$-strictly-uniform if,
  and only if, the system has the non-interference property.
\end{theorem}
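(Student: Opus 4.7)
The plan is to exhibit an explicit bijection between Player~$\one$'s strategies in $\mathcal{G}_{Sys}$ and controllers of the system, and then to show that strict uniformity with $\text{\tt SameOutput}$ corresponds exactly to non-interference of the controlled system. Formally, by a controller I mean a prefix-closed partial function $C:\Sigma^*\to 2^\Sigma$ with $C(u\cdot a)$ defined precisely when $a\in C(u)$; it restricts the admissible input sequences to those $w\in\Sigma^\omega$ all of whose finite prefixes lie in the domain of $C$, and the $C$-controlled system is non-interfering when for all admissible $u,u'$ with $u\sim_L u'$, $\Output(\delta(s_I,u))=\Output(\delta(s_I,u'))$. To a strategy $\sigma$ associate the controller $C_\sigma$ by induction: set $C_\sigma(\epsilon):=A$ where $\sigma(v_I)=(s_I,A)$; and for each $a\in C_\sigma(u)$, let $C_\sigma(u\cdot a):=A'$ where $A'$ is the set prescribed by $\sigma$ at the unique $\sigma$-induced play prefix whose $\iota$-projection equals $u\cdot a$. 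Conversely, any controller $C$ determines a strategy $\sigma_C$ by moving, at a $V_1$-position reached after a history of projection $u$, to $(s,C(u))$. Since the arena is deterministic along input sequences, $\iota$ restricts to a bijection between finite play prefixes induced by $\sigma$ and finite input sequences admitted by $C_\sigma$, so $\sigma\mapsto C_\sigma$ and $C\mapsto\sigma_C$ are mutually inverse.

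Next, I unfold the semantics of $\text{\tt SameOutput}=\G\bigwedge_{p_o\in\AP}(p_o\to\R p_o)$ under strict uniformity: it holds of every $\pi\in\out(\sigma)$ iff for all $\pi,\pi'\in\out(\sigma)$ and all indices $i,j$ with $\pi[0,i]\equiv_L\pi'[0,j]$, the output proposition true at $\pi[i]$ also holds at $\pi'[j]$. Since $\val(a,s)=\val(s,A)=\{p_{\Output(s)}\}$, the output proposition at any arena position is exactly that of the underlying system state; and since $\equiv_L$ enforces $\iota(\pi[0,i])\sim_L\iota(\pi'[0,j])$, the projections $u=\iota(\pi[0,i])$ and $u'=\iota(\pi'[0,j])$ automatically share the same length and are $\sim_L$-equivalent. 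The formula thus reads: for all finite input sequences $u,u'$ admitted by $C_\sigma$ with $u\sim_L u'$, $\Output(\delta(s_I,u))=\Output(\delta(s_I,u'))$, which is precisely the non-interference property of the $C_\sigma$-controlled system. The bijection $\sigma\leftrightarrow C_\sigma$ therefore restricts to the claimed one-to-one correspondence.

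For the particular case, the trivial strategy $\sigma_{\text{triv}}:(a,s)\mapsto(s,\Sigma)$ induces the controller that allows every input at every step, under which the controlled system coincides with $Sys$ itself; the general equivalence then specialises to: $\sigma_{\text{triv}}$ is $(\equiv_L,\text{\tt SameOutput})$-strictly-uniform if, and only if, $Sys$ has the non-interference property. The main delicate point throughout is the bookkeeping around $\iota$ and the fact that $V_1$- and $V_2$-positions of $\mathcal{G}_{Sys}$ carry the output of the same system state, so that the comparison of $p_o$ at $\pi[i]$ and $\pi'[j]$ remains meaningful even when $i$ and $j$ have different parities; once this alignment is settled, the equivalence between the logical formula and the set-theoretic non-interference condition is immediate from the definitions.
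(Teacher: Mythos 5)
Your proof is correct and follows what is evidently the intended argument: the paper omits the proofs of this section as ``quite simple'', and your route --- the bijection between strategies and controllers via the projection $\iota$, the unfolding of strict uniformity for $\G\bigwedge_{p_o}(p_o\rightarrow\R p_o)$, and the parity observation that the $V_1$-position $(a,s)$ and the $V_2$-positions $(s,A)$ carry the output proposition of the same state $s$, so that $\equiv_L$-related prefixes of different parities compare meaningfully --- is exactly the natural one. The only cosmetic caveat, inherited from the paper's statement rather than introduced by you, is that the correspondence is one-to-one only after identifying strategies that agree on histories consistent with themselves (strategies may differ arbitrarily off their induced plays while yielding the same controller), and that choices $A=\emptyset$ deadlock the play and correspond to blocking controllers, a corner case best excluded on both sides.
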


Here we have to use the strict uniformity as we only want to consider
the executions of the machine allowed by the control represented by
the strategy.

Notice that in order to make this control problem more realistic, one
would seek a maximal permissive strategy/controller so that
environments as ``large'' as possible are computed, but this is out of
the scope of the paper.



\subsection{Diagnosis and Prognosis}
\label{sec-diagnosis}
Diagnosis has been intensively studied, in particular by the
discrete-event systems community (see for example
\cite{sampath95,yoo2002polynomial,cassandras1999discrete}). Informally,
in this setting, a discrete-event system is \emph{diagnosable} if any
occurrence of a faulty event during an execution is eventually
detected. More formally, \emph{diagnosability} is a property of
discrete-event systems which are structures of the form
$Sys=(S,\Sigma,\Sigma_o,\Delta,s_I,F)$, with $S$ the set of states,
$\Sigma$ the set of events, $\Sigma_0\subseteq \Sigma$ the
\emph{observable} events, $\Delta \subseteq S \times \Sigma \times S$
the transition relation, $s_I$ the initial state and $F\subseteq S$
the faulty states; we assume that once a faulty state is
reached, only faulty states can be reached (the fault is
persistent). We can rephrase this problem in our setting, with a single
player simulating the system. Notice that since there is only one player, a strategy defines a unique infinite play. 

Here we assume that $p_f\in\AP$ represents the fact of being faulty. Let $\mathcal G_{Sys}=(V,E,v_I,\val)$, with
$V_1=\emptyset$, $V_2=(\Sigma\uplus\{\epsilon\})\times S$, $(a,s)E(b,s')$ whenever
$(s,b,s')\in\Delta$, $v_I=(\epsilon,s_I)$,  and
$\ell(a,s)=\{f\}$ if $s\in F$, $\emptyset$ otherwise.  We write
$\rho\equiv_{\Sigma_o}\rho'$ whenever the sequences of observable
events underlying $\rho$ and $\rho'$ are the same (these sequences are
obtained from the sequences of positions in the play: for each
position of the form $(a,s)$, keep its letter $a$ if $a\in\Sigma_o$,
and delete it otherwise).
In this game Player 1 never plays, all she does is look at Player 2
simulate the system. There is only one strategy for her, which is to
do nothing, and all possible plays, representing all possible executions of
$Sys$, are in the outcome of this strategy.

\begin{theorem}
  $Sys$ is diagnosable if, and only if, Player 1 has a $(\equiv_{\Sigma_o},\F p_f
  \rightarrow \F \R p_f)$-fully-uniform strategy in $\mathcal G_{Sys}$.
\end{theorem}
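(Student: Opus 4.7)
The plan is to unfold the definitions on both sides and observe that they match. First, since $V_1=\emptyset$, Player 1 has no position to play at, so there is a unique (empty) strategy $\sigma$, and its outcome $\out(\sigma)$ coincides with $\PlaysInf$. Thus the fully-uniform condition for Player~1 reduces to: for every $\pi\in\PlaysInf$, $\PlaysInf,\pi,0\eqmodels \F p_f \rightarrow \F \R p_f$. Moreover, by construction of $\mathcal{G}_{Sys}$, infinite plays starting in $v_I=(\epsilon,s_I)$ are in one-to-one correspondence with infinite executions of $Sys$, and a position $(a,s)$ satisfies $p_f$ iff $s\in F$. Hence $\pi,0\eqmodels \F p_f$ holds exactly when the corresponding execution reaches a faulty state.

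Next I would unfold the semantics of $\R p_f$. At a point $k$ along $\pi$, $\pi,k\eqmodels \R p_f$ means that for every $\pi'\in\PlaysInf$ and every $j$ with $\pi[0,k]\equiv_{\Sigma_o}\pi'[0,j]$, the position $\pi'[j]$ is faulty. Given the definition of $\equiv_{\Sigma_o}$ (equality of the underlying sequences of observable events), this is precisely the statement that the observable prefix produced by $\pi[0,k]$ is sufficient to guarantee that any execution producing the same observation has already reached $F$. Combining with the previous step, $\pi,0\eqmodels \F p_f \rightarrow \F \R p_f$ rephrases as: if the execution corresponding to $\pi$ is faulty, then there exists a finite prefix of $\pi$ after which every execution with the same observable sequence is faulty.

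This is exactly the classical definition of diagnosability of $Sys$ (using that faults are persistent, so ``being faulty at the end of a prefix'' is equivalent to ``having committed a fault along that prefix''). Quantifying universally over all $\pi\in\PlaysInf$ matches the universal quantification over faulty executions in the diagnosability definition, yielding the equivalence. The two directions are then immediate: from a diagnosable system one checks pointwise that $\F p_f \rightarrow \F \R p_f$ holds on every play, and conversely, any counterexample to diagnosability (a faulty execution whose every prefix admits an observationally equivalent non-faulty witness) directly falsifies the formula on the corresponding play.

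The only subtlety I anticipate is being careful with index mismatches: two $\equiv_{\Sigma_o}$-related finite plays $\pi[0,k]$ and $\pi'[0,j]$ need not have the same length, because unobservable events can be interleaved differently. One must therefore state diagnosability in the index-independent form ``any execution with the same observable sequence'' rather than ``at the same step,'' which is precisely what the $\R$ modality delivers here. If one wished to also recover the classical bounded-delay formulation (existence of a uniform $n$ such that detection occurs within $n$ observable events after the fault), this follows from the pointwise statement by a standard König-style argument on the finite-state product, but this is not needed for the stated equivalence.
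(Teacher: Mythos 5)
Your proposal is correct and follows essentially the route the paper intends: the paper omits the proof as ``quite simple'' but its surrounding remarks (since $V_1=\emptyset$ there is a unique trivial strategy whose outcome is the whole set of infinite plays, and faults are persistent) are exactly the steps you execute, together with the right unfolding of the semantics of the $\R$ modality under $\equiv_{\Sigma_o}$. Your closing observation on length mismatches is also consistent with the paper, which itself notes that this relation does not preserve the length of plays, and your separation of the pointwise ``eventually detected'' formulation from the bounded-delay one correctly identifies what the stated equivalence does and does not require.
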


Here again, since the outcome of the only possible strategy for Player
1 is the whole set of plays, we could equivalently choose full or
strict uniformity.

Prognosis is a companion of diagnosis, but focuses on the ability to predict
that a fault will occur. Prognosability-like properties can be defined
in our setting. As an example, we aim at saying that a system is \emph{prognosable} whenever the fact that a fault
occurs in a system is known at least one step in advance. 
We define the following formula, which means that either a fault never occurs, or it occurs but we know
it one step before it does.

$$\text{\tt Prognose} :=  (\neg p_f)\mathcal W (\neg p_f\wedge
  \R\fullmoon p_f))$$

Using the same framework as for diagnosis, we can propose:
\begin{definition}
\label{def-prognosis}
  A system $Sys$ is \emph{prognosable} if there is a
  $(\equiv_{\Sigma_o},\texttt{Prognose})$-fully-uniform strategy for Player 1 in $\mathcal G_{Sys}$.
\end{definition}

\subsection{Dependence Logic}
Dependence Logic is a flourishing topic introduced recently by
V\"a\"an\"anen \cite{vaananen2007dependence}. It extends first order
logic by adding atomic dependence formulas $\dep(t_1,\ldots,t_n)$,
which express functional dependence of the term $t_n$ on the terms
$t_1,\ldots,t_{n-1}$. A dependence atom $\dep(x_0,x_1)$ can be
interpreted as "the value of $x_1$ depends only on the value of
$x_0$", or "the value of $x_1$ is fully determined by the value of
$x_0$".
 Evaluating a dependence between terms on a
single assignment of the free variables is meaningless: in order to tell whether $t$ depends on $t'$, one
must vary the values of $t'$ and see how the values of $t$ are affected. This is why a formula of
Dependence Logic is evaluated on a first-order model $\mathcal M$ and a \emph{set of assignments} for the free variables, called
a \emph{team}. If $t$ is a term, $\mathcal M$ a model and $s$ an assignment for the free variables
in $t$, we note $\llbracket t\rrbracket^{\mathcal M}_s\in M$ the interpretation
of $t$ in the model $\mathcal M$ with the assignment $s$.

Dependence Logic is inspired by Independence Friendly logic (IF logic), 
a logic defined by Hintikka and
Sandu \cite{hintikka1989informational}.  Van Benthem gives in
\cite{benthem2005epistemic} an imperfect-information
evaluation game for IF logic, using a notion of  uniform strategy
that corresponds with  the classic notion of imperfect-information or
observation-based strategy.

For Dependence Logic, an evaluation game  is also given in
\cite{vaananen2007dependence}. It is presented as a game with imperfect
information, because strategies
must verify some ``uniformity constraint'', which makes the game
undetermined.  However,
 the notion of uniform strategy in these games
 is not defined as ``playing uniformly in
positions of an
information set'', but rather as ``playing such that, when positions
of an information set are reached, some property
 uniformly holds in these positions''. For this reason it is a notion of uniform strategy
different from the one used by Van Benthem in
\cite{van2001games,benthem2005epistemic}, and this game is not a game with
imperfect information \emph{stricto sensu}.   

We present the evaluation game, the uniformity requirement and then show that this notion fits in our setting.

Let $\Phi$ be a sentence (formula with no free variable) of Dependence
Logic in negation normal form, \ie only atomic formulas can be negated, and let $\mathcal M$ be a first order model.  $\mathcal
G^{\mathcal M}(\phi)$ is a two player game between Player $\one$ and
Player $\two$; positions are of the form $(\varphi,n,s)$, where
$\varphi$ is a subformula of $\Phi$, $n$ is the position in $\Phi$ of
the first symbol of $\varphi$ and $s$ is an assignment whose domain
contains the free variables of $\varphi$. The index $n$
is used to decide, given two positions containing the same dependence atom, whether
they are the same \emph{syntactic} subformulas of $\phi$ or not. For a subformula $\varphi$,
$len(\varphi)$ is the number of symbols in $\varphi$.
The game starts in position $(\phi,1,\emptyset)$ and the rules are as
follows:
\begin{center}
\begin{tabular}{ll}
position $(t_1=t_2,n,s)$: &  if $\llbracket
  t_1\rrbracket^{\mathcal M}_s=\llbracket t_2\rrbracket^{\mathcal
    M}_s$, Player $1$ wins. \\
position $(\neg (t_1=t_2),n,s)$: &  if $\llbracket
  t_1\rrbracket^{\mathcal M}_s=\llbracket t_2\rrbracket^{\mathcal
    M}_s$, Player $2$ wins \\
position $(Rt_1\ldots t_m,n,s)$: &  if $\llbracket R\rrbracket^{\mathcal M}\llbracket t_1\rrbracket^{\mathcal
    M}_s\ldots \llbracket t_m\rrbracket^{\mathcal
    M}_s$, Player $1$ wins. \\
position $(\neg Rt_1\ldots t_m,n,s)$: &  if $\llbracket R\rrbracket^{\mathcal M}\llbracket t_1\rrbracket^{\mathcal
    M}_s\ldots \llbracket t_m\rrbracket^{\mathcal
    M}_s$, Player $2$ wins. \\
position $(\dep(t_1,\ldots,t_m),n,s)$: &  Player $1$ wins.\\
position $(\neg \dep(t_1,\ldots,t_m),n,s)$: &  Player $2$ wins.\\
position $(\varphi\vee\psi,n,s)$: &  Player $1$ chooses between
position $(\varphi,n,s)$ and $(\psi,n+1+\mbox{len}(\varphi),s)$.\\
position $(\varphi\wedge\psi,n,s)$: &  Player $2$ chooses between position $(\varphi,n,s)$ and $(\psi,n+1+\mbox{len}(\varphi),s)$.\\
position $(\exists x\varphi,n,s)$: &  Player $1$ chooses a value $a\in M$
and moves to $(\varphi,n+2,s(x\mapsto a))$ \\
position $(\forall x\varphi,n,s)$: &  Player $2$ chooses a value $a\in M$
and moves to $(\varphi,n+2,s(x\mapsto a))$ 

\end{tabular}
\end{center}

\begin{figure}[htb]
  \begin{center}
  \resizebox{\textwidth}{!}{
    \begin{tikzpicture}[shorten >= 1pt, shorten <= 1pt]
      \node (e) {$\begin{array}{c}\forall x_0\forall x_1\phi'\\ \emptyset\end{array}$}; 
      \node (c) [below =of e] {$\begin{array}{c}\forall x_1\phi' \\  \{x_0 \mapsto 1\}\end{array}$};
      \node (l) [left = 6cm of c] {$\begin{array}{c}\forall x_1\phi' \\  \{x_0 \mapsto 0\}\end{array}$};
      \node (r) [right = 6cm of c] {$\begin{array}{c}\forall x_1\phi'\\ \{x_0 \mapsto 2\}\end{array}$};

      \node (lc) [below =of l] {$\begin{array}{c}x_0=x_1\vee \dep(x_0,x_1)\\ \{x_0 \mapsto 0, \\ \;\;\, x_1 \mapsto 1\}\end{array}$};
      \node (auxc) [below = 2cm of lc] {};
      \node [left = 2pt of auxc,fill=red!20] (lcl) {$\begin{array}{c}x_0=x_1\\ \{x_0 \mapsto 0, \\ \;\;\, x_1 \mapsto 1\}\end{array}$};
      \node [right = 2pt of auxc,fill=green!20] (lcr) {$\begin{array}{c}\dep(x_0,x_1)\\ \{x_0 \mapsto 0, \\ \;\;\, x_1 \mapsto 1\}\end{array}$};

      \node (ll) [left = 1.4cm of lc] {$\begin{array}{c} x_0=x_1\vee \dep(x_0,x_1)\\\{x_0 \mapsto 0, \\ \;\;\, x_1 \mapsto 0\}\end{array}$};
      \node (auxl) [below = 2cm of ll] {};
      \node [left = 2pt of auxl,fill=green!20] (lll) {$\begin{array}{c} x_0= x_1\\\{x_0 \mapsto 0, \\ \;\;\, x_1 \mapsto 0\}\end{array}$};
      \node [right = 2pt of auxl,fill=green!20] (llr) {$\begin{array}{c}\dep(x_0,x_1)\\\{x_0 \mapsto 0, \\ \;\;\, x_1 \mapsto 0\}\end{array}$};
      
      \node (lr) [right = 1.4cm of lc] {$\begin{array}{c}x_0=x_1\vee \dep(x_0,x_1)\\ \{x_0 \mapsto 0, \\ \;\;\, x_1 \mapsto 2\}\end{array}$};
      \node (auxr) [below = 2cm of lr] {};
      \node [left = 2pt of auxr,fill=red!20] (lrl) {$\begin{array}{c}x_0=x_1\\ \{x_0 \mapsto 0, \\ \;\;\, x_1 \mapsto 2\}\end{array}$};
      \node [right = 2pt of auxr,fill=green!20] (lrr) {$\begin{array}{c}\dep(x_0,x_1)\\ \{x_0 \mapsto 0, \\ \;\;\, x_1 \mapsto 2\}\end{array}$};
      
      \node (dots) [right = 4cm of lr] {$\ldots$};
      
        \path[thick, ->] (e) edge (l) edge (c) edge (r)
        (l) edge (ll) edge (lc) edge (lr);
        
        \path[thick, ->] (ll) edge (llr);
        \path[thick,->] (lc) edge (lcl);
        \path[thick,->] (lr) edge (lrl); 
        
        \path[line width = 2pt, red, ->] (ll) edge (lll);
        \path[line width = 2pt, red, ->] (lr) edge (lrr);
        
        \path[line width = 2pt, red, ->] (lc) edge (lcr);
        
        \path[dashed, blue] (lcr.south) edge [bend right] (lrr.south);
          

\end{tikzpicture}
  }
\end{center}
  \caption{Evaluation game for $\forall x_0\forall x_1 (x_0=x_1\vee
    \dep(x_0,x_1))$ with $M=\{0,1,2\}$}  
  \label{fig-dep}
\end{figure}
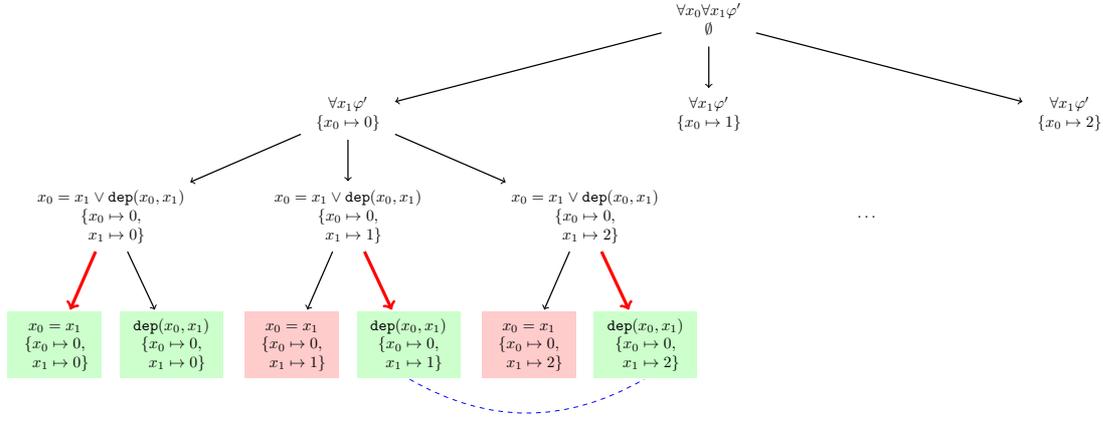

Figure~\ref{fig-dep} represents (a part of) the game for the
evaluation of the Dependence Logic formula $\forall x_0\forall x_1
(x_0=x_1\vee \dep(x_0,x_1))$ on a model $\mathcal M$ with domain
$M=\{0,1,2\}$. This
formula is not true in this model. Indeed, intuitively, if there are
at least three elements in the domain, it is not because $x_0$ and
$x_1$ do not have the same value that the value of $x_0$ determines the
value of $x_1$: there remain two possibilities for the value of $x_1$
that are not the one of $x_0$. So there should not be a winning
strategy for Player 1 for the evaluation game to be correct.

 In the first two rounds, Player~2 chooses a value for
each of the universally quantified variables
$x_0$ and  $x_1$. Then Player~1 chooses a disjunct and we
reach atomic formulas. Green positions are winning for Player~1, red
ones are winning for Player~2. The red arrows indicate a 
strategy for Player~1 (we focus on the subtree for $x_0=0$, we assume that
Player 1 plays the same way in the others). We see that this strategy is winning for Player 1, while
the formula is not true in the model. 

The problem comes from the fact
that, as said earlier, a dependence atom must not be evaluated on a
single assignment but on a \emph{set} of assignments, a team. In the
evaluation game, the team in which a dependence atom $\dep(t_1,\ldots,t_n)$ should be
evaluated is the set of assignments  in leafs that contain
$\dep(t_1,\ldots,t_n)$ and are reached by the strategy. In the
example, the assignments in the two leafs linked with the  dashed
line, $\{x_0\mapsto 0,x_1\mapsto 1\}$ and $\{x_0\mapsto 0, x_1\mapsto 2\}$ , are thus part of the team in which $\dep(x_0,x_1)$ should be
evaluated (there are more in the two other subtrees not shown here).
Then we see that this strategy should not be allowed as while both
leaves agree on the value of $x_0$, they do not agree on the value of $x_1$.
This observation leads to defining a certain notion of uniform strategy.

A strategy $\sigma$ for Player $\one$ is \emph{uniform} in the sense of
\cite{vaananen2007dependence} if, for every two finite plays $\rho,\rho'\in \out(\sigma)$
such that $last(\rho)=(\dep(t_1,\ldots,t_m),n,s,\one)$ and
$last(\rho')=(\dep(t_1,\ldots,t_m),n,s',\one)$ contain the same (syntactically
speaking) atomic dependence subformula, if $s$ and $s'$ agree on
$t_1,\ldots,t_{m-1}$, then they also agree on $t_m$. Then we have the
expected property that a sentence $\phi$ of
Dependence Logic is true in a model $\mathcal M$ if Player $\one$ has
a winning \emph{uniform} strategy in $\mathcal G^{\mathcal M}(\phi)$.

We characterize uniform strategies in the sense of \cite{vaananen2007dependence} as uniform strategies in our sense.
The game described above easily fits in our setting (we add loops on terminal positions so as to obtain infinite plays).
Let $\Phi$ be a sentence of Dependence Logic, and $\mathcal M$ be a
finite model. We call $G_\Phi^{\mathcal M}=(V,E,v_I)$ the evaluation
game defined above.
For each object $a\in M$ of the domain we use one atomic proposition $p_a$,
and we also use the proposition $p_d$ to mark positions that contain dependence atoms. 
So we assume that $\{p_a\mid a\in M\}\uplus\{p_d\}\subset \AP$,
and  we define $\ga_\Phi^{\mathcal M}=(V,E,v_I,\val)$, where the valuation $\val$ is as follows :

\[\begin{array}{rcl}
\val(\dep(t_1,\ldots,t_m),n,s) & = &
\{p_a,p_d\}   \mbox{ with } a=\llbracket t_m\rrbracket^{\mathcal M}_s 
\\
\val(\;\_\;,n,s) & = &\emptyset
\end{array}\]

We define the equivalence relation $\simeq$ on finite plays as the smallest reflexive relation such that if there is
$\varphi=\dep(t_1,\ldots,t_m)$ and $n$ s.t $last(\rho)=(\varphi,n,s)$, $last(\rho')=(\varphi,n,s')$,
 and $\llbracket t_i\rrbracket^{\mathcal
    M}_s=\llbracket t_i\rrbracket^{\mathcal
    M}_{s'}$ for $i=1,\ldots,n-1$, then $\rho \simeq \rho'$. Now we define the formula 
\[\text{\tt AgreeOnLast}: = \G(p_d\rightarrow \bigvee\limits_{a\in M}\R p_a)\] which expresses that
whenever the current position contains a dependence atom $\dep(t_1,\ldots,t_m)$, it agrees with all equivalent
finite plays on some value $a$ for $t_m$. Since equivalent plays are those ending in a position that has the same 
dependence atom and agrees on the first $m-1$ terms, it is easy to prove: 

\begin{theorem}
A strategy $\sigma$ for Player $\one$ in $G_\Phi^{\mathcal M}$ is
uniform if, and only if, it is $(\simeq,\text{\tt
  AgreeOnLast})$-strictly-uniform in $\ga_\Phi^{\mathcal M}$.
\end{theorem}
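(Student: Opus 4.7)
The plan is to prove both directions by unwinding the definitions of $\simeq$, of $\val$, and of the formula $\text{\tt AgreeOnLast}$, using the fact that in $\ga_\Phi^{\mathcal M}$ a position $(\dep(t_1,\ldots,t_m),n,s)$ is marked by $p_d$ and by exactly one $p_a$ where $a = \llbracket t_m\rrbracket^{\mathcal M}_s$.

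For the forward direction, I assume $\sigma$ is uniform in the sense of Väänänen and fix $\pi \in \out(\sigma)$ together with an index $i$ at which $p_d$ holds. Then $\pi[i]$ is forced to be of the form $(\dep(t_1,\ldots,t_m),n,s)$, and I set $a := \llbracket t_m\rrbracket^{\mathcal M}_s$ so that $p_a$ holds at $\pi[i]$. To establish $\out(\sigma),\pi,i \eqmodels \R p_a$, I take an arbitrary $\pi'\in\out(\sigma)$ and $j$ with $\pi[0,i]\simeq\pi'[0,j]$. By definition of $\simeq$, $last(\pi'[0,j])$ has the very same syntactic dependence subformula $(\dep(t_1,\ldots,t_m),n,s')$ and the two assignments agree on $t_1,\ldots,t_{m-1}$; Väänänen-uniformity applied to $\pi[0,i]$ and $\pi'[0,j]$ (both in $\out(\sigma)$) then yields $\llbracket t_m\rrbracket^{\mathcal M}_{s'} = a$, hence $p_a \in \val(\pi'[j])$, which is what is needed. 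Since $i$ is arbitrary, $\out(\sigma),\pi,0 \eqmodels \G(p_d \rightarrow \bigvee_{a\in M} \R p_a)$.

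For the converse direction, I assume $\sigma$ is $(\simeq,\text{\tt AgreeOnLast})$-strictly-uniform and take two plays $\rho,\rho'\in\out(\sigma)$ whose last positions are $(\dep(t_1,\ldots,t_m),n,s)$ and $(\dep(t_1,\ldots,t_m),n,s')$ respectively, with $s,s'$ agreeing on $t_1,\ldots,t_{m-1}$. Extending $\rho$ and $\rho'$ into infinite plays $\pi,\pi' \in \out(\sigma)$ using the added loops on terminal positions, I let $i = |\rho|-1$ and $j = |\rho'|-1$, so that $\rho = \pi[0,i]$ and $\rho' = \pi'[0,j]$ and by definition $\pi[0,i]\simeq\pi'[0,j]$. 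The formula $\text{\tt AgreeOnLast}$ applied to $\pi$ at position $i$ gives some $a\in M$ with $\out(\sigma),\pi,i\eqmodels \R p_a$; evaluating at $\pi'[j]$ forces $p_a \in \val(\pi'[j])$, while evaluating at $\pi[i]$ itself (via reflexivity of $\simeq$) forces $p_a \in \val(\pi[i])$. Since each dependence-atom position carries exactly one such $p_a$, it must be that $\llbracket t_m\rrbracket^{\mathcal M}_s = a = \llbracket t_m\rrbracket^{\mathcal M}_{s'}$, which is precisely the Väänänen-uniformity condition.

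The only non-routine step is the extension from finite plays $\rho,\rho'$ to infinite plays in $\out(\sigma)$ used in the converse direction; this is where the convention of adding loops on terminal positions (mentioned when the game is embedded in our setting) is essential, together with the reflexivity of $\simeq$ to ensure that $\pi[i]$ itself qualifies as an $\simeq$-related play when checking $\R p_a$. Once these are in place the whole equivalence reduces to tracking a single proposition $p_a$ through the one-to-one correspondence between values of $t_m$ under $s$ and labels at the corresponding positions.
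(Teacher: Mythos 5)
Your proof is correct and follows the argument the paper intends: the paper omits the proof of this theorem (proofs of Section~3 are declared simple and left out, with the remark that it ``is easy to prove''), and the straightforward unwinding of $\simeq$, of the valuation $\val$, and of $\text{\tt AgreeOnLast}$ that you give is exactly that argument. You also correctly isolate the two points the embedding genuinely relies on --- extending the finite plays $\rho,\rho'$ into members of $\out(\sigma)$ via the loops added on terminal positions, and using the reflexivity of $\simeq$ so that $\R p_a$ pins the witnessing disjunct to $a=\llbracket t_m\rrbracket^{\mathcal M}_s$ at the current position itself --- so nothing is missing.
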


In this example again, the strict uniformity is needed, as we only
want to catch leaves that are hit by the strategy.
Also, note that here $\simeq$ is an equivalence because we take the
reflexive closure of some transitive and reflexive relation. But
as for observation based strategies, if we did not take the
reflexive closure, we could avoid
using the proposition $p_d$ and use the simpler formula 
\[\text{\tt AgreeOnLast'}: = \G \bigvee\limits_{a\in M}\R p_a\]

Indeed the relation would not be reflexive: in particular plays not ending in a dependence atom
would not be linked to any play, and $\R\phi$ would trivially
hold for any $\phi$ in these plays; this is why testing whether we are in a
dependence atom before enforcing that some $p_a$ must hold everywhere would no longer be needed.

\subsection{Dependence logic and games with imperfect information}

As we said, the evaluation game for Dependence Logic presented in the previous subsection  is
  said to be a game with imperfect information.  We do not agree,
  because the difference between games with perfect information and
  games with imperfect information (at least in the perfect recall
  setting, it is not as clear otherwise, see
  \cite{piccione1997absent}) lies in the fact that in the latter, some finite plays
  are related, in the sense that they are indistinguishable to one of
  the players, and this player must behave the same
  way in these related situations. Concerning the evaluation game for Dependence
  Logic, the difference with perfect-information games is that some plays are related, those ending in positions bound to
  the same atomic dependence formula $\dep(t_1,\ldots,t_n)$ with
  valuations agreeing on $t_1,\ldots,t_{n-1}$, and the valuations in
  these related positions must agree on  $t_n$.  So
  it is not that players should behave the same way in
  related situations, but rather that the players should
  \emph{have behaved} in such a way that the valuations for $t_n$ are
  the same in related situations.

  But it is true that there is a similarity between these two
  constraints on allowed strategies, as shown by looking at the
  formulas of the uniformity properties capturing observation-based strategies ({\tt SameAct}) and
  uniform strategies in the sense of Dependence Logic ({\tt
    AgreeOnLast}):
 \[\text{\tt
    SameAct}=\G(p_1\rightarrow\bigvee\limits_{a\in\Act}\!\!\!\R
  \fullmoon p_a) \mbox{\hspace{.3cm} and \hspace{.3cm}}\text{\tt
    AgreeOnLast} = \G(p_d\rightarrow \bigvee\limits_{a\in M}\R p_a)\]
  
 In the first case, the same thing must \emph{happen} in equivalent situations, whereas
  in the second case, the same thing must \emph{hold} in equivalent situations.

The resemblance is even more striking if we take the second versions:
 \[\text{\tt
    SameAct'}=\G\bigvee\limits_{a\in\Act}\!\!\!\R
   p_a \mbox{\hspace{.3cm} and \hspace{.3cm}}\text{\tt
    AgreeOnLast'} = \G\bigvee\limits_{a\in M}\R p_a\]

  Neither semantics games for Dependence Logic are games with imperfect
  information in the classical sense, nor games with imperfect
  information can be easily described using the uniform strategy
  notion of \cite{vaananen2007dependence}, but both can be characterized in a very
  similar way with our notion of uniform strategies.

\section{Regular relations} 
\label{sec-transducer}
The previous examples from the literature all fall into a particular
class of binary relations, so-called \emph{regular}\footnote{Actually,
  the genuine vocabulary is ``rational relations'', but we prefer to
  use ``regular relations'' to avoid any misleading terminology in the
  context of game theory and rational players.} relations. They are characterized by some
finite state machines called \emph{transducers}
\cite{berstel1979transductions}.
One way to see a transducer is to picture a nondeterministic automaton
with two tapes, an \emph{input} tape and an \emph{output} tape. The
transducer reads an input finite word on its input tape and writes out
a finite word on its output tape. Notice that this machine is in
general nondeterministic so that it may have several outputs for a
given input word. Hence, transducers  define binary relations. 

\begin{definition}
\label{def-trans}
A \emph{Finite State Transducer} (FST) is a tuple
$T=(Q,\Sigma,\Gamma,q_i,Q_F,\Delta)$, where $Q$ is a finite set of
states, $\Sigma$ is the \emph{input alphabet}, $\Gamma$ is the
\emph{output alphabet}, $q_i\in Q$ is a distinguished \emph{initial
  state}, $Q_F\subseteq Q$ is a set of \emph{accepting states}, and
$\Delta\subset Q\times (\Sigma \cup \{\epsilon\}) \times
(\Gamma \cup \{\epsilon\}) \times Q$ is a finite set of
\emph{transitions}.
\end{definition}

Intuitively, $(q,a,b,q')\in \Delta$ means that the transducer can move
from state $q$ to state $q'$ by reading $a$ and writing $b$ (both possibly $\epsilon$).
We also define the \emph{extended transition relation} $\Delta^*$,
which is the smallest relation such that:
\begin{itemize}
\item for all $q\in Q$, $(q,\epsilon,\epsilon,q)\in\Delta^*$, and
\item if $(q,w,w',q')\in \Delta^*$ and $(q',a,b,q'')\in \Delta$, then
  $(q,w\cdot a,w'\cdot b,q'')\in\Delta^*$.
\end{itemize}

In the following, for $q,q'\in Q$, $w\in \Sigma^*$ and $w'\in
\Gamma^*$, notation $q\labedgeext{w}{w'}q'$ means $(q,w,w',q')\in\Delta^*$.

\begin{definition}
\label{def-rel-trans}
  Let $T=(Q,\Sigma,\Gamma,q_i,Q_F,\Delta)$ be an FST. The \emph{relation
  recognized by} $T$ is \[[T]\egdef \{(w,w')\mid w\in\Sigma^*, w'\in
\Gamma^*, \exists q\in Q_F,\;q_i\labedgeext{w}{w'}q\}.\]
\end{definition}

In other words, a couple $(w,w')$ is in the relation recognized by $T$
if there is an accepting execution of $T$ that reads $w$ and outputs $w'$.

\begin{definition}
\label{def-rel-rat}
Let $\Sigma$ and $\Gamma$ be two alphabets. 
A binary relation $\leadsto\;\subseteq \Sigma^*\times \Gamma^*$ is \emph{regular} if it is recognized by an FST.
\end{definition}

We now recall some basic properties of regular relations that will be
useful later on. The first property regards intersection (regular
relations are not closed under intersection in general) and the second
property regards composition. The interested reader is referred to
\cite{berstel1979transductions} for technical details.

\begin{property}
\label{prop-rat}
Let $\Sigma$ and $\Gamma$ be two alphabets.
  Let $\leadsto\;\subseteq \Sigma^*\times\Gamma^*$ be a regular relation,
  and let $L$ and $L'$ be two regular languages over $\Sigma$ and
  $\Gamma$ respectively. Then $\leadsto \cap\; (L\times L')$ is also a
  regular relation.
\end{property}

Let $\Sigma,\Sigma',\Sigma''$ be three alphabets.
  Let $\leadsto_1\;\subseteq \Sigma^*\times\Sigma'^*$ and
  $\leadsto_2\;\subseteq \Sigma'^*\times\Sigma''^*$ be two binary
  relations.  

\begin{definition}
\label{def-comp}
The \emph{composition} of $\leadsto_1$ and $\leadsto_2$ is
$\leadsto_1\circ \leadsto_2\subseteq \Sigma^*\times \Sigma''^*$,
defined by:
\[\leadsto_1\circ\leadsto_2=\{(\rho,\rho'')\mid\exists
\rho'\in\Sigma'^*, \rho\leadsto_1\rho' \mbox{ and }\rho'\leadsto_2\rho''\}\]
\end{definition}

\begin{property}{\cite{elgot1965relations}}
\label{prop-comp}
If $\leadsto_1$ and $\leadsto_2$ are two regular
  relations recognized respectively by $T_1$ and $T_2$ , then
  $\leadsto_1\circ \leadsto_2$ is also regular, and the composition
  of the transducers
  $T_1\circ T_2$ recognizes $\leadsto_1\circ\leadsto_2$.
\end{property}


We close this section by taking two examples of binary
relations  involved in Section~\ref{sec-literature} and showing that
they are regular. In fact, one can check that they all are.

\begin{example} 
  \label{example1}
We first consider an example induced by the
  imperfect-information setting of Section~\ref{sec-K-based}. Let
  $\ga_{imp}=(V,E,v_I,\val,\sim)$ be an imperfect-information game arena as described in Section~\ref{sec-K-based}. 
  Relation $\approx\;\subseteq \PlaysFin ^2$ is an \emph{observational equivalence} over
  plays, generated
  by the equivalence $\sim$ between positions. Consider the FST $T_{obs}$
  depicted in Figure~\ref{fig-FSTexample-K-based}, with a unique
  initial state (ingoing arrow) that is also the final state (two
  concentric circles). It reads an
  input letter (a position) and outputs any position that is
  $\sim$-equivalent to it. This FST recognizes a relation $\simeq$ over
  $V^*\times V^*$, such that $\approx\;=\;\simeq\;\cap\; \PlaysFin^2$.
  Because  $\PlaysFin$
  is a regular language (one can see $\ga$ as a finite state
  automaton), Proposition~\ref{prop-rat} gives that
  $\simeq\cap\; \PlaysFin^2$ is also a regular relation, and in fact  
  $\ga\times T_{obs}\times \ga$ is a transducer that precisely recognizes $\approx$.\\
\end{example}
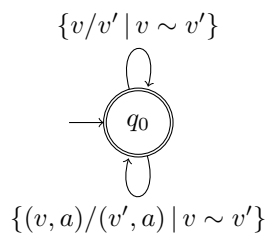
\begin{figure}[htp]
\begin{center}
\begin{tikzpicture}
  \node[state,initial,initial text=,accepting] (q_0) {$q_0$}; 
\path[->] (q_0) edge   [loop above] node {$\{v/v' \,|\,
    v \sim v'\}$} (q_0) 
(q_0) edge   [loop below] node {$\{(v,a)/(v',a)\,|\, v\sim v'\}$} (q_0);
\end{tikzpicture}
\end{center}
\caption{$T_{obs}$, an FST for the equivalence relation $\simeq$ of Sections~\ref{sec-K-based} and \ref{sec-opacity}.}
\label{fig-FSTexample-K-based}
\end{figure}

\newcommand{\oblue}{{\tt blue}}
\newcommand{\opink}{{\tt pink}}

\begin{example} 
\label{example2}
Consider another binary relation that is also an equivalence, but
induced by some alphabetic morphism $h : V \to {\mathcal O} \cup
\{\epsilon\}$: two plays $\rho$ and $\rho'$ are equivalent, written
$\rho\equiv_{\mathcal O}\rho'$, whenever $h(\rho)=h(\rho')$. This
example generalizes the one of Section~\ref{sec-diagnosis} where
the alphabetic morphism is a mere projection.  In order to draw an FST
for the relation $\equiv_{\mathcal O}$, we need to fix the set ${\mathcal O}$. Assume ${\mathcal O}$ has only two elements
$\oblue$ and $\opink$ so that any position in the game is either
observed as $h(v)=\oblue$ or $h(v)=\opink$ or unobserved (the case
$h(v)=\epsilon$). The FST $T_{\mathcal O}$ that recognizes $\equiv_{\mathcal O}$ is drawn in
Figure~\ref{fig-FSTexample-diagnosis}. 
 Once again, one should take the product of $T_{\mathcal O}$ with the
 game arena to restrict the relation to $\PlaysFin\times\PlaysFin$.
Remark that contrary to the
case of equivalence  $\approx$ in Figure~\ref{fig-FSTexample-K-based}, the
 equivalence $\equiv_{\mathcal O}$ does not preserve the
length of plays.
\end{example}

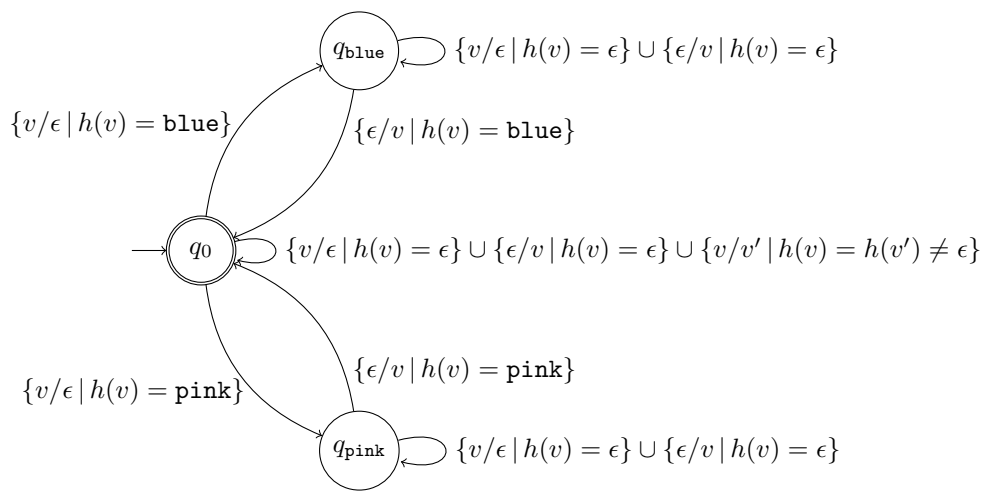
\begin{figure}[htp]
\begin{center}
\begin{tikzpicture}
  \node[state,initial,initial text=,accepting] (q_0) {$q_0$}; 
\node (bidon) [right=1.5cm of q_0] {}; 
\node[state] (qo1) [above=2cm of bidon] {$q_\oblue$};
  \node[state] (qo2) [below=2cm of bidon] {$q_\opink$}; 
\path[->] 

(q_0) edge   [loop right] node {$\{v/\epsilon \,|\,
    h(v)=\epsilon\} \cup \{\epsilon/v \,|\, h(v)=\epsilon\} \cup 
    \{v/v' \,|\, h(v)=h(v')\neq\epsilon\}$} (q_0) 

(q_0) edge [bend left] node [pos=0.5,left] 
{$\{v/\epsilon\,|\, h(v)=\oblue\}$} (qo1) 

(qo1) edge [loop right] node
  {$\{v/\epsilon\,|\, h(v)=\epsilon\}\cup  \{\epsilon/v\,|\,
    h(v)=\epsilon\}$} (qo1) 

(qo1) edge [bend left] node [pos=0.2,right] {$\{\epsilon/v\,|\, h(v)=\oblue\}$} (q_0)

(q_0) edge [bend right] node [pos=0.6,left] {$\{v/\epsilon\,|\, h(v)=\opink\}$} (qo2) 
 
(qo2) edge [loop right] node
  {$\{v/\epsilon\,|\, h(v)=\epsilon\}\cup  \{\epsilon/v\,|\,
    h(v)=\epsilon\}$} (qo2) 

(qo2) edge [bend right] node [pos=0.2,right] {$\{\epsilon/v\,|\,
    h(v)=\opink\}$} (q_0) ;

\end{tikzpicture}
\end{center}
\caption{$T_{\mathcal O}$, an FST for the equivalence relation
  $\equiv_{\mathcal O}$.}
\label{fig-FSTexample-diagnosis}
\end{figure}

Example~\ref{example1} is an example of a simple
transducer, but in fact all relations of Section~\ref{sec-literature}
can be recognized by the transducer given in
Example~\ref{example2}, for an appropriate alphabetic morphism. 
In the next section, we will only consider regular relations
over plays, i.e.\ the relation $\leadsto$ in the model is recognized by some FST. 

\section{Automated synthesis of fully-uniform strategies}
\label{sec-decision}
In this section, we study the problem of synthesizing fully-uniform
strategies.  We restricy finite arenas. Motivated by Section~\ref{sec-transducer}, we only
consider regular relations, and as a consequence we always assume that
the semantic relation between plays is described by a finite state
transducer $T$. Still, when it is clear from the context, we write
$\leadsto$ instead of $[T]$.  Also, because we only consider full
uniformity and no longer the strict one, it is understood in the
semantics of a formula that the universe $\Pi$ is the set of plays of
the considered game, hence we omit it when it is clear from the
context.  Furthermore, we will sometimes make the semantic relation
between plays explicit. All these conventions yield a notation of the
form $\pi,i\models_\leadsto \varphi$ meaning that
$\PlaysInf,\pi,i\models \varphi$ with the $\R$-modality semantics
based on the binary relation $\leadsto$. 

Finally, in this section, the size $|\ga|$ of an arena $\ga$ is
the number of positions, the size $|T|$ of a transducer $T$ is the
number of states,
and  $|\phi|$ is the size of formula $\phi$.
 
\begin{definition} For each $n\in\setn$, we define the decision problem $\fus{n}$ by:
$$
\begin{array}{ll}
  \fus{n}:=\{(\ga,T,\phi)\,|\, & \ga \text{ is an arena and } ([T],\phi) \text{ is a uniformity property\footnote{as in Definition~\ref{def-unifprop}} with } 
  \varphi \in \lang_n \text{ such that} \\
  & \text{ there exists a } ([T],\phi)\text{-fully-uniform strategy for Player $1$ in } \ga\}
\end{array}
$$
The \emph{fully-uniform strategy problem} is \FUS$:=\bigcup_{n\in\setn}\fus{n}$

\end{definition}

For an instance $(\ga,T,\phi)$ of \FUS, its size is defined as $|(\ga,T,\phi)|:=|\ga|+|T|+|\phi|$.  





\begin{theorem}
\label{theo-fusn}
$\fus{n}$ is in $2$-\EXPTIME\ for $n\leq 2$, and in $n$-\EXPTIME\ for $n>2$.
\end{theorem}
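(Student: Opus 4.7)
The plan is to reduce $\FUS$, by induction on $n=\depth(\phi)$, to classical $LTL$ game solving on a finite arena, which is in $2$-\EXPTIME\ by the Pnueli--Rosner theorem (polynomial in the arena, doubly-exponential in the formula). The base case $n=0$ is immediate: $\phi\in\ltl$ and the relation $\leadsto$ is irrelevant (cf.\ the remark following Definition~\ref{def-unifstrat}), so the problem coincides with classical $LTL$ game solving on $\ga$.

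For the inductive step, first use Properties~\ref{prop-rat} and~\ref{prop-comp} to synchronise $T$ with $\ga$ on both its input and its output side, so that $T$ only reads and writes valid finite plays. Then perform a powerset construction producing an arena $\pow{\ga}$ whose positions are pairs $(v,S)$ with $v\in V$ and $S\subseteq Q\times V$ (where $Q$ is the state set of $T$); transitions simulate $E$ on the first component and perform the standard nondeterministic subset update of $T$ on the second component. The invariant is that for any finite play $\hat\rho$ of $\pow{\ga}$ whose first-component projection is $\rho\in\PlaysFin$, the subset $S$ at the last position of $\hat\rho$ equals $\{(q,v')\mid\exists\lambda\in V^*,\,(q_i,\rho,\lambda,q)\in\Delta^*\mbox{ and }last(\lambda)=v'\}$; in particular, the set of last positions of the $\leadsto$-related prefixes of $\rho$ is $\{v'\mid\exists q\in Q_F,(q,v')\in S\}$. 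Every innermost $\R$-subformula $\R\psi$ of $\phi$ (so $\psi\in\ltl$) is then replaced by a fresh atomic proposition $p_{\R\psi}$ whose valuation at $(v,S)$ is \emph{true} iff $\ga,v'\models\psi$ for every $(q,v')\in S$ with $q\in Q_F$; this is precomputed for each state of $\pow{\ga}$ by $LTL$ model-checking on $\ga$. The resulting formula $\hat\phi$ has $\R$-depth $n-1$, the transducer $T$ lifts to $\pow{\ga}$ by composition with the first-component projection (same size up to a polynomial factor, by Proposition~\ref{prop-comp}), and a structural induction on the semantics yields
\[
(\ga,T,\phi)\in\fus{n}\;\iff\;(\pow{\ga},T,\hat\phi)\in\fus{n-1},
\]
so the induction hypothesis applies.

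Each iteration multiplies the arena size by $2^{|Q|\cdot|V|}$, so after $n$ iterations the arena is bounded by a tower of $n$ exponentials in $|\ga|\cdot|T|$, while the formula stays of size $O(|\phi|)$. The final $LTL$ game takes time polynomial in the arena and doubly-exponential in $|\phi|$: for $n\leq 2$ the doubly-exponential contribution of $|\phi|$ absorbs or matches the tower, yielding $2$-\EXPTIME, while for $n>2$ the tower dominates, yielding $n$-\EXPTIME. The main obstacle will be the semantic correctness of the reduction: one needs a natural bijection between strategies in $\ga$ and strategies in $\pow{\ga}$ (obtained by erasing the $S$-component) and a proof, by induction on the structure of $\phi$, that this bijection preserves the truth values of every $\R$-subformula versus its rewritten counterpart in $\hat\phi$, crucially using the subset-update invariant to justify replacing innermost $\R\psi$ by $p_{\R\psi}$.
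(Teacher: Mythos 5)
Your proposal follows essentially the same route as the paper's proof: an iterated powerset construction tracking the possible configurations of the transducer along the play, positional marking of the innermost $\R\psi$ subformulas by fresh propositions $p_{\R\psi}$ computed via $LTL$ model checking on the original arena, a lift of the relation to the powerset arena by composing the transducer with the projection and its inverse (the paper's $\pow{T}=\Tdown\circ T\circ\Tup$), and a final $LTL$ game whose $|\ga_n|^{2^{O(|\phi_n|)}}$ cost yields exactly the stated split between $2$-\EXPTIME\ for $n\leq 2$ and $n$-\EXPTIME\ for $n>2$; this is precisely Proposition~\ref{prop-decr} iterated by Algorithm~\ref{fig-algo-FUS}. (In your displayed equivalence the reduced instance should carry the lifted transducer rather than $T$ itself, since $T$ relates plays of $\ga$ and not of $\pow{\ga}$; given that you construct the lift just before, I read this as a typo.)

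One detail of your construction does need repair. Your invariant $S=\{(q,v')\mid\exists\lambda\in V^*,\ q_i\labedgeext{\rho}{\lambda}q \mbox{ and } last(\lambda)=v'\}$ silently discards every transducer state reachable on input $\rho$ with \emph{empty} output so far: such a state contributes no pair $(q,v')$, and transitions out of it on the next input letter can then no longer be simulated, so your subset update is unsound for transducers that delay their output (e.g., $T_{\mathcal O}$ of Example~\ref{example2}, which reads unobservable positions while writing $\epsilon$). The paper's record $(\Set,\Last)$ with $\Set\subseteq Q$ and $\Last:\Set\rightarrow\mathcal P(V)$ keeps exactly this missing information, since $\Last(q)$ may be empty --- indeed the initialization sets $\Last_{-1}(q_i)=\emptyset$ --- and the update rule has a dedicated clause for transitions $q\labedgeext{v}{\epsilon}q'$ that write nothing. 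Equivalently, you can patch your representation to $S\subseteq Q\times(V\cup\{\epsilon\})$. With that correction, your invariant becomes the paper's Lemma~\ref{lem-last}, your identification of the information set with the accepting pairs becomes Proposition~\ref{prop-infoset}, the soundness of replacing $\R\psi$ by $p_{\R\psi}$ (which, as you note, hinges on full uniformity: the universe is $\PlaysInf$, and $\psi\in\ltl$ depends only on the future, so $\R\psi$ is a function of the information set alone) becomes Lemma~\ref{lem-position} and Lemma~\ref{lemma-decr}, and your complexity accounting matches Equation~\eqref{eq-theta}.
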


\begin{corollary}
The fully-uniform strategy problem is decidable.
\end{corollary}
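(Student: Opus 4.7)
The plan is to observe that the corollary is a direct consequence of Theorem~\ref{theo-fusn}, once one notes that every formula of $\lang$ belongs to some $\lang_n$ with $n$ computable from the formula.

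First, given an arbitrary instance $(\ga,T,\phi)$ of \FUS, I would compute $n := \depth(\phi)$ by a straightforward syntactic recursion over $\phi$ that counts the maximum number of nested $\R$ modalities; this is clearly effective and linear in $|\phi|$. By Definition~\ref{def-depth}, $\phi \in \lang_n$, hence $(\ga,T,\phi)$ is an instance of $\fus{n}$.

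Next, I would invoke Theorem~\ref{theo-fusn}, which provides a decision procedure for $\fus{n}$ running in $2$-\EXPTIME\ when $n\leq 2$ and in $n$-\EXPTIME\ otherwise. Running this procedure on $(\ga,T,\phi)$ decides membership in $\fus{n}$, and since $\FUS = \bigcup_{n\in\setn}\fus{n}$ and $\phi$ has a unique $\R$-depth, this is equivalent to deciding membership in \FUS.

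There is no real obstacle here: the only subtlety to highlight is that although $\FUS$ is a countable union of problems, the dispatch is uniform because $n = \depth(\phi)$ is read off from the input itself, so a single algorithm works for all inputs. The price to pay is that the overall complexity is non-elementary, since the tower of exponentials in Theorem~\ref{theo-fusn} grows with $\depth(\phi)$, which is unbounded over the input space; nevertheless, for each fixed input the procedure terminates, which is exactly what is needed for decidability.
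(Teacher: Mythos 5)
Your proposal is correct and matches the paper's own argument: the corollary is an immediate consequence of Theorem~\ref{theo-fusn}, and the uniform dispatch on $n=\depth(\phi)$ that you make explicit is exactly what the paper's Algorithm~1 performs implicitly, iterating the $\R$-elimination in its \textbf{while} loop until the depth reaches zero and then calling the $LTL$ game solver. Your remark that decidability survives the non-elementary growth of the complexity bound is sound and consistent with the paper's discussion.
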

The rest of this section is dedicated to the proof of
Theorem~\ref{theo-fusn}. We describe a powerset construction for a new
arena (Section~\ref{sec-powerset-arena}) and a way to lift the
semantic relation between plays to this powerset construction
(Section~\ref{sec-lift-trans}). Next we show how to exploit this
construction to reduce membership in $\fus{n+1}$ to membership in
$\fus{n}$ (Section~\ref{sec-elimination}).

\subsection{Powerset arena}
\label{sec-powerset-arena}


In games with imperfect information, the information set of a player
after a finite play is the set of positions that are
consistent with what she has observed.
We  define a similar notion in our setting,
and  we show that the regularity assumption on the
 relation is sufficient to compute information sets and build
a  powerset construction arena in which formulas of the kind $R\phi$
where $\phi\in \ltl$ can be
evaluated positionally.  


\begin{definition}
\label{def-infoset}
Let $\ga$ be an arena, $\leadsto\;\subseteq \PlaysFin^2$ and $\rho\in\PlaysFin$. The \emph{information
set}  after the finite play $\rho$ is the set of terminating positions
of related finite plays. Formally, 
$I(\rho)=\{v'\mid
\exists \rho'\cdot v'\in\PlaysFin,\;\rho\leadsto\rho'\cdot v'\}$.
\end{definition}



For an arena $\ga$ and a transducer $T$ over $\PlaysFin$,
we construct a powerset arena $\pow{\ga}$ in which formulas of the
form $\R\phi$ can be evaluated positionally  when $\phi\in\langn{0}$.

 Unlike classic powerset
 constructions \cite{reif84,chatterjee2006algorithms}, in our setting, the new information set
 after a move in the game
 cannot be computed knowing only the previous information set and the
 new position. 
 To compute information sets,
 we need to simulate the nondeterministic execution of  $T$,
 taking as an input the sequence of positions played and writing as
 output the related plays.  This is why in our construction, positions do not contain
 directly information sets, but rather  we add in the positions of the
 game sufficient information on the current configuration of the
 transducer.

 More precisely, two
 things are necessary: the set of states  the transducer may be in
 after reading the sequence of positions
 played so far, plus for each of these states the set of possible last positions
 written on the output tape (because of nondeterminacy, different
 executions can end up in configurations with same state but different last letters on
 the output tape). We only need to remember the last letter on the
 output tape, and not the whole tape, because the information set which we aim at computing is
 just the set of the \emph{last} positions of related plays.

 So positions are of the form
 $(v,\Set,\Last)$, where $\Set\subseteq Q$ is
 the set of possible current states of $T$,
 and $\Last:\Set\rightarrow \mathcal P(V)$ associates to a state $q\in \Set$ the
 set of the possible last positions on the
 output tape of $T$ if the current state is $q$. 
 The transitions in this arena follow the ones in $\ga$, we just
 have to maintain the additional information about the configuration of the
 transducer.
In order to define the initial position 
$\pow{v}_I=(v_I,\Set_I,\Last_I)$ of $\pow{\ga}$ we need to simulate the execution of $T$ starting from its initial state
and reading $v_I$. To do so, we introduce   an artificial position
$\pow{v}_{-1}$ that initializes the
transducer before reading the first position of a
play. Concretely, $\pow{v}_{-1}=(v_{-1},\Set_{-1},\Last_{-1})$, with
$v_{-1}\notin V$ a fresh position,
$\Set_{-1}=\{q_i\}$ because before starting the transducer is in its
initial state, and $\Last_{-1}(q_i)=\emptyset$ because nothing is
written on the output tape.  


\begin{definition}
\label{def-powerset}
Let $\ga=(V,E,v_I,\val)$ be an arena and 
$T=(Q,V,q_i,Q_F,\Delta)$ be an FST such that $[T]\subseteq \PlaysFin^2$. We define
the arena $\pow{\ga} =(\pow{V},\pow{E},\pow{v}_I,\pow{\val})$ by:
\begin{itemize}
\item $\pow{V}=V\times
  \mathcal P(Q) \times (Q \rightarrow \mathcal
P(V)) \uplus \{\pow{v}_{-1}\}$
\item $(u,\Set,\Last)\;\pow{\rightarrow}\;(v,\Set',\Last')$ if 
\begin{itemize}
\item $u=v_{-1}$ and $v=v_I$, or $u\rightarrow v$,
\item $\Set'=\{q'\mid\exists q\in \Set,\exists \lambda'\in
  V^*,\;q\labedgeext{v}{\lambda'}q'\}$ and
\item
    $\Last'(q')=\{v'\mid \exists q\in \Set, \exists \lambda' \in V^*,
     q\labedgeext{v}{\lambda'\cdot v'}q', \mbox{ or }
    q\labedgeext{v}{\epsilon}q' \mbox{ and }v'\in \Last(q)\}$ 
\end{itemize}
\item $\pow{v}_I$ is the only $\pow{v}\in\pow{V}$ such that 
  $\pow{v}_{-1}\;\pow{\rightarrow}\; \pow{v}$.
\item $\pow{\val}(\pow{v})=\val(v)$ if $\pow{v}=(v,\Set,\Last)$.
\end{itemize}
\end{definition}
Notice that $\pow{\ga}$ has size $|\pow{\ga}|=O(|\ga|\times 2^{|T|}\times 2^{|T||\ga|})$.\\


Regarding the definition of transitions, the first point means that
transitions in $\pow{E}$ follow those in $E$, except for the only
transition leaving $\pow{v}_{-1}$, that is used to define $\pow{v}_I$.
The second point for the definition of $S'$ expresses that when we move from $u$ to
$v$ in $\ga$, we give $v$ as an input to the transducer. So the set of
states the transducer can be in after this move
is exactly the set of states that can be reached from one of the
previous possible states by reading $v$ and writing some sequence of positions
(possibly $\epsilon$). We use the notation $\lambda$ because without
loss of generality one could assume that the transducer can only output
sequences of positions that form a valid path in the game graph. But
it is not important here, the assumption that $[T]\subseteq
\PlaysFin^2$ is sufficient. 
Finally, the third point for the definition of $\Last'$ captures that if some position $v'$ is at the end of the output
tape after the transducer read $v$ and reached $q'$, it is either
because while reading $v$ the last letter it wrote is $v'$, or it wrote
nothing and $v'$ was already the end of the output tape before reading $v$.  

To finish with, $\pow{v}_I$ is the only successor of $\pow{v}_{-1}$,
and the valuation of a position in the powerset construction is the
valuation of the underlying position in the original arena.

\begin{remark}
\label{rem-compute-finite}
We need to clarify the following definitions:
\begin{itemize}
\item  $\Set'=\{q'\mid\exists q\in \Set,\exists \lambda'\in
V^*,\;q\labedgeext{v}{\lambda'}q'\}$ and
\item $\Last'(q')=\{v'\mid \exists q\in \Set, \exists \lambda' \in V^*,
q\labedgeext{v}{\lambda'\cdot v'}q', \mbox{ or }
q\labedgeext{v}{\epsilon}q' \mbox{ and }v'\in \Last(q)\}$
\end{itemize}
Indeed, in each one, there
can be infinitely many such $\lambda'$ because of transitions
that read nothing on the input tape. Still, 
$\Set'$ and $\Last'$ can be computed
in linear time in the size of $\Delta$.
To do so, for each $q$ in $\Set$, we compute
$\Setaux{q,v}=\{(q',v')\mid \exists \lambda'\in V^*,
q\labedgeext{v}{\lambda'\cdot v'}q', \mbox{ or
}q\labedgeext{v}{\epsilon}q' \mbox{ and }v'\in \Last(q)\}$.
$\Set'$ and $\Last'$ can be easily reconstructed from $\cup_{q\in \Set}\Setaux{q,v}$. 
For $q\in \Set$, computing $\Setaux{q,v}$ can be done by depth-first
search, by first reading $v$ and then only $\epsilon$, and remembering
the last output. The search can be stopped when we reach a state that
has already been visited.
     
\end{remark}

Let us take an arena $\ga$ and an FST $T$ such that $[T]\subseteq
\PlaysFin^2$. For a position $\pow{v}$ of $\pow{\ga}$, we will access the
different components of the position with the notations
$\pow{v}.v,\pow{v}.\Set,\pow{v}.\Last$.

There is a  natural mapping  $f:\PlaysInf\rightarrow\pow{\PlaysInf}$:
for an infinite play $\pi\in\ga$, we define $f(\pi)$ as the only play
$\pow{\pi}$ in $\pow{\PlaysInf}$ such that $\pow{\pi}[0].v \cdot\pow{\pi}[1].v\cdot
\pow{\pi}[2].v \ldots = \pi$. This is well defined because from a
position $\pow{u}=(u,\Set,\Last)$ in $\pow{V}$, for  $v\in V$ such that
$u\rightarrow v$, there is a unique move $\pow{u}\;\pow{\rightarrow}\;\pow{v}$ such that
 $\pow{v}.v=v$. It is easy to see that $f$ is a bijection. From now on
 we will slightly abuse notations: for a play $\pi\in\PlaysInf$,
 $\pow{\pi}$ will denote $f(\pi)$, and for
 $\pow{\pi}\in\pow{\PlaysInf}$, $\pi$ will denote $f^{-1}(\pow{\pi})$.
 \emph{idem} for finite plays.


\begin{definition}
\label{def-localinfoset}
  Let $\pow{v}=(v,S,\Last)\in\pow{V}$ be a position in the powerset
  construction. Its \emph{local information set} $\pow{v}.I$ is
  defined by:
  \[\pow{v}.I:=\bigcup_{q\in S \cap Q_F}\Last(q)\] 
\end{definition}

Definition~\ref{def-localinfoset} means that a position $v'$ is in the information set
after a play $\pow{\rho}$ if and only if there is an execution of the
transducer on the word seen so far that terminates in an accepting
state with $v'$ the last output.  Actually, the
local information sets correspond to the real information
sets as expressed by the following proposition. 

\begin{proposition}
\label{prop-infoset}
For all $\pow{\rho} \in \pow{\PlaysFin}$, $\last(\pow{\rho}).I=I(\rho)$.
\end{proposition}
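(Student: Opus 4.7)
The plan is to prove, by induction on the length of $\pow{\rho}$, the following stronger invariant which describes how the two components of $\last(\pow{\rho})$ track all possible configurations of $T$ after reading $\rho$: (a) $q \in \last(\pow{\rho}).\Set$ if and only if there exists $\mu' \in V^*$ with $q_i \labedgeext{\rho}{\mu'} q$; and (b) $v' \in \last(\pow{\rho}).\Last(q)$ if and only if there exists $\mu' \in V^*$ with $q_i \labedgeext{\rho}{\mu' \cdot v'} q$. Once (a) and (b) are established, Proposition~\ref{prop-infoset} follows immediately from Definition~\ref{def-localinfoset} together with the definitions of $I(\rho)$ and $[T]$: indeed, $v' \in \last(\pow{\rho}).I$ iff some $q \in Q_F$ satisfies $v' \in \last(\pow{\rho}).\Last(q)$ (note that the latter already forces $q \in \last(\pow{\rho}).\Set$), which by (b) is equivalent to the existence of an accepting execution of $T$ on $\rho$ ending with $v'$ on the output tape, i.e., $v' \in I(\rho)$.

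The base case $|\pow{\rho}| = 1$, where $\pow{\rho} = \pow{v}_I$ and $\rho = v_I$, is a direct unfolding of the transition equations from the auxiliary predecessor $\pow{v}_{-1}$, for which $\Set_{-1} = \{q_i\}$ and $\Last_{-1}(q_i) = \emptyset$; in particular the second disjunct of $\Last'$ is vacuous at this step.

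For the inductive step, write $\pow{\rho} = \pow{\rho}_0 \cdot \pow{v}$, set $v = \pow{v}.v$ and $\pow{u} = \last(\pow{\rho}_0)$, and apply Definition~\ref{def-powerset}. Clause (a) is straightforward: it amounts to concatenating an execution $q_i \labedgeext{\rho_0}{\mu'} q$ given by the induction hypothesis with a transducer segment $q \labedgeext{v}{\lambda'} q'$, and conversely. The forward direction of (b) combines the IH (applied to either $\Set$ or $\Last$, depending on which disjunct of $\Last'$ witnesses $v' \in \pow{v}.\Last(q')$) with the corresponding transducer step to assemble an execution $q_i \labedgeext{\rho_0 \cdot v}{\nu' \cdot v'} q'$.

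The main obstacle is the backward direction of (b): from an arbitrary execution $q_i \labedgeext{\rho_0 \cdot v}{\nu' \cdot v'} q'$, we must recover the disjunctive structure of $\Last'$. The plan is to split the execution at the state $q$ reached immediately after the transition that consumes the last input letter of $\rho_0$, so that $q_i \labedgeext{\rho_0}{\mu'} q$ (hence $q \in \pow{u}.\Set$ by the IH on (a)) and $q \labedgeext{v}{\eta'} q'$, with $\mu' \cdot \eta' = \nu' \cdot v'$. A case split on $\eta'$ then finishes the argument: if $\eta' \neq \epsilon$, then $\eta'$ ends with $v'$, yielding the first disjunct of $\Last'$; if $\eta' = \epsilon$, then $\mu'$ ends with $v'$, so $v' \in \pow{u}.\Last(q)$ by the IH on (b), which yields the second disjunct. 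The well-definedness of this split relies only on the fact that along any execution of $T$ input letters are consumed in order, so the position ``immediately after the last input letter of $\rho_0$ has been read'' is an unambiguous point in the transducer run.
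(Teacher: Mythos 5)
Your proposal is correct and takes essentially the same route as the paper: your invariant (a)--(b) is precisely the paper's Lemma~\ref{lem-last}, proved there by the same induction on $\pow{\rho}$, with the same base case via $\Set_{-1}=\{q_i\}$ and $\Last_{-1}(q_i)=\emptyset$, the same splitting of a run on $\rho_0\cdot v$ at the state reached after the last input letter of $\rho_0$, and the same case analysis on whether the output of the $v$-segment is empty. The only cosmetic differences are that the paper states clause (b) relative to $q\in\last(\pow{\rho}).\Set$ (as you note, this condition is forced anyway) and explicitly invokes the standing assumption $[T]\subseteq\PlaysFin^2$ to conclude that the accepting output $\lambda'\cdot v'$ is indeed in $\PlaysFin$, hence $v'\in I(\rho)$.
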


\noindent The rest of this subsection is dedicated to the proof of Proposition~\ref{prop-infoset}.

\begin{lemma}
\label{lem-last}
Let $\pow{\rho} \in \pow{\PlaysFin}$, and let
$\pow{v}:=\last(\pow{\rho})$. Then, $\pow{v}.\Set = \{q \mid \exists
\lambda'\in V^*, q_i\labedgeext{\rho}{\lambda'}q\}$, and for each $q
\in \pow{v}.\Set$, $\pow{v}.\Last(q) =\{v'\mid \exists \lambda' \in V^*,
q_i\labedgeext{\rho}{\lambda'\cdot v'}q\}$.
\end{lemma}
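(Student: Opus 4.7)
The plan is to prove the statement by induction on the length $n$ of the finite play $\pow{\rho}$.

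For the base case $n=1$, we have $\pow{\rho} = \pow{v}_I$ and $\rho = v_I$. By definition, $\pow{v}_I$ is the unique successor of the auxiliary position $\pow{v}_{-1}$, whose components are $\pow{v}_{-1}.\Set = \{q_i\}$ and $\pow{v}_{-1}.\Last(q_i) = \emptyset$. Plugging these into the transition rule of Definition~\ref{def-powerset} directly yields
$\pow{v}_I.\Set = \{q \mid \exists \lambda', q_i \labedgeext{v_I}{\lambda'} q\}$,
and for each such $q$, the second disjunct in the definition of $\Last'$ is vacuous (since $\Last_{-1}(q_i) = \emptyset$), so $\pow{v}_I.\Last(q) = \{v' \mid \exists \lambda', q_i \labedgeext{v_I}{\lambda' \cdot v'} q\}$, which matches the claim.

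For the inductive step, write $\pow{\rho}' = \pow{\rho} \cdot \pow{v}$ with $\pow{u} := \last(\pow{\rho})$ and $\pow{u}\,\pow{\rightarrow}\,\pow{v}$. The induction hypothesis characterizes $\pow{u}.\Set$ and $\pow{u}.\Last$ in terms of $\rho$. Using the defining equation for $\pow{v}.\Set$ in Definition~\ref{def-powerset} together with the IH, each state $q' \in \pow{v}.\Set$ corresponds to a run $q_i \labedgeext{\rho}{\lambda_1} q \labedgeext{v}{\lambda_2} q'$, which concatenates to $q_i \labedgeext{\rho \cdot v}{\lambda_1 \lambda_2} q'$; conversely, any accepting run reading $\rho \cdot v$ factors at the point where input $v$ begins being read, yielding the desired equality.

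The slightly more delicate part concerns $\pow{v}.\Last(q')$. For the inclusion $\subseteq$, one case-distinguishes according to the two disjuncts in the definition of $\Last'$: either $v'$ is output while reading $v$ (giving a run $q_i \labedgeext{\rho}{\lambda_1} q \labedgeext{v}{\lambda_2 \cdot v'} q'$ that concatenates appropriately), or $v' \in \pow{u}.\Last(q)$ and $q \labedgeext{v}{\epsilon} q'$, in which case the IH provides a run $q_i \labedgeext{\rho}{\lambda_1 \cdot v'} q$, which extended by the $\epsilon$-output transition to $q'$ keeps $v'$ as the last output letter. For $\supseteq$, given any run $q_i \labedgeext{\rho \cdot v}{\lambda' \cdot v'} q'$, split the run at the last configuration reached after the entirety of $\rho$ has been read on the input tape; then depending on whether $v'$ is output before or after this splitting point, one recovers precisely one of the two disjuncts of the definition of $\Last'$, invoking the IH on the prefix. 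The main subtlety is correctly handling $\epsilon$-input transitions of $T$ that may fire after $\rho$ has been fully consumed but before $v$ begins being read; these are absorbed either into the run ending at $\pow{u}$ or into the $v$-transition, but the equivalence is unaffected because the defining set $\Setaux{q,v}$ in Remark~\ref{rem-compute-finite} already closes under such transitions. No obstacle beyond this bookkeeping is expected.
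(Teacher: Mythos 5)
Your proof is correct and takes essentially the same route as the paper's: induction on the play with base case $\pow{v}_I$ (where the second disjunct of $\Last'$ is vacuous because $\Last_{-1}(q_i)=\emptyset$), and an inductive step that concatenates runs in one direction and, in the other, factors any run of $T$ on $\rho\cdot v$ at the transition reading the last input letter, with your case split on where $v'$ is output matching exactly the paper's two cases $\lambda'_2=\epsilon$ versus $\lambda'_2=\lambda'_3\cdot v'$. One cosmetic remark: the handling of $\epsilon$-input transitions is already built into the definition of $\Delta^*$ (hence of $q\labedgeext{v}{\lambda'}q'$), so your appeal to Remark~\ref{rem-compute-finite} --- which concerns computability of $\Set'$ and $\Last'$, not their semantics --- is unnecessary.
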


\begin{proof}

The proof is by induction on $\pow{\rho}$. 

\begin{description}
\item[Case $\pow{v}_I$.] We note
  $\pow{v}_I=(v_I,\Set_I,\Last_I)$, and start with the left-right
  inclusions for both equalities. Let
$q'\in \Set_I$ and $v'\in\Last_I(q)$.
 By definition of $\Last_I$ and $S_I$ there is a $q$ in
$\Set_{-1}=\{q_i\}$ (so $q=q_i$) and  a $\lambda'\in V^*$ such that
$q_i\labedgeext{v_I}{\lambda'} q'$, which proves the first inclusion.  Since
$\Last_{-1}(q_i)=\emptyset$ by definition, the case $\lambda'=\epsilon$ is not
possible. So there exists $\lambda''$ such
that $\lambda'=\lambda''\cdot v'$, which gives us the second inclusion.

The proofs for the two right-left inclusions are straightforward
applications of the definitions
of $\Set_I$ and $\Last_I$.

\item [Case $\pow{\rho}\cdot \pow{u} \cdot \pow{v}$,
  $\pow{\rho}\in\pow{\PlaysFin}\cup\{\epsilon\}$.] We note
  $\pow{u}=(u,\Set,\Last)$ and $\pow{v}=(v,\Set',\Last')$.  For
  the two left-right inclusions, let $q'\in \Set'$ and $v'\in
  \Last'(q')$. By definition of $\Set'$ and $\Last'$ there is a $q$ in
  $\Set$ and a $\lambda'_1\in V^*$ such that $q\labedgeext{v}{\lambda'_1} q'$,
  and either $\lambda'_1=\lambda'\cdot v'$ for some $\lambda'$, or
  $\lambda'_1=\epsilon$ and $v'\in\Last(q)$. 
  By induction hypothesis, we have that $S=\{q \mid \exists \lambda'\in
  V^*, q_i\labedgeext{\rho\cdot u}{\lambda'}q\}$,
  so there exists $\lambda'_2\in V^*$ such that
  $q_i\labedgeext{\rho\cdot u}{\lambda'_2} q$, and
  by transitivity,
  $q_i\labedgeext{\rho\cdot u\cdot
    v}{\lambda'_2\cdot\lambda'_1}q'$. This proves the first
  left-right inclusion. For the second one we split the two cases
  for $\lambda'_1$.
  \begin{itemize}
  \item If $\lambda'_1=\lambda'_3\cdot v'$ for some $\lambda'_3$, then by transitivity we have $q_i\labedgeext{\rho\cdot u\cdot
    v}{\lambda'_2\cdot\lambda'_3\cdot v'}q'$, which proves the second left-right inclusion. 
    \item If $\lambda'_1=\epsilon$, then $v'\in \Last(q)$. By
  induction hypothesis there is some $\lambda'_3$ such that
  $q_i\labedgeext{\rho \cdot u}{\lambda'_3\cdot
    v'}q$. By transitivity we obtain   $q_i\labedgeext{\rho\cdot
    u\cdot v}{\lambda'_3\cdot
    v'}q'$, which also  proves the second left-right inclusion. 
  \end{itemize}

  Now for the first right-left inclusion, take $q'$ and $\lambda'$ such
  that
  $q_i\labedgeext{\rho\cdot u\cdot v}{\lambda'}q'$. Necessarily
  there exist $\lambda'_1,\lambda'_2$ and $q$ such that
  $q_i\labedgeext{\rho\cdot u}{\lambda'_1}q$,
  $q\labedgeext{v}{\lambda'_2}q'$ and $\lambda'_1\cdot\lambda'_2=\lambda'$. By
  induction hypothesis $q\in \Set$, so by definition of $\Set'$,
  $q'\in\Set'$. For the second right-left inclusion,  take
  $q'\in \Set'$, and take $v'$ and $\lambda'$ such that
  $q_i\labedgeext{\rho\cdot u\cdot v}{\lambda'\cdot
  v'}q'$. Again, necessarily   there exist $\lambda'_1,\lambda'_2$ and $q$ such that
  $q_i\labedgeext{\rho\cdot u}{\lambda'_1}q$,
  $q\labedgeext{v}{\lambda'_2}q'$ and $\lambda'_1\cdot\lambda'_2=\lambda'\cdot v'$. By
  induction hypothesis $q\in \Set$. We distinguish two cases. 
  \begin{itemize}
    \item If $\lambda'_2=\epsilon$, then
  $\lambda'_1=\lambda' \cdot v'$, hence
  $q_i\labedgeext{\rho \cdot u}{\lambda'\cdot
    v'}q$. By induction hypothesis, $v'\in\Last(q)$, so by definition
  of $\Last'$, because $q\in \Set$ and $q\labedgeext{v}{\epsilon}q'$, we obtain
  $v'\in \Last'(q')$.  
  \item If $\lambda'_2=\lambda'_3\cdot v'$ for some $\lambda'_3$, then
by definition of $\Last'$, because $q\in \Set$, we have $v'\in
\Last'(q')$.   
\end{itemize}
\end{description}
This finishes the induction.\end{proof}

We can now terminate the proof of Proposition~\ref{prop-infoset}: Let
$\pow{\rho} \in \pow{\PlaysFin}$, and let $\pow{v}=\last(\pow{\rho})$
be of the form $\pow{v}=(v,\Set,\Last)$.  We remind that (Definition~\ref{def-infoset}) $I(\rho)=\{v'\in V\mid
\exists \rho'\cdot v'\in\PlaysFin,\; \rho\;\leadsto \rho'\cdot v'\}$.\\

We start with the left-right inclusion. Let $v'\in \pow{v}.I$. By definition,
$\pow{v}.I=\bigcup_{q\in S\cap Q_F} \Last(q)$,
so $v'\in \Last(q)$ for some $q\in S\cap Q_F$. By
Lemma~\ref{lem-last}, there exists $\lambda'\in V^*$ such that
$q_i\labedgeext{\rho}{\lambda'\cdot v'} q$, and
because $q\in Q_F$, we have that
$(\rho, \lambda'\cdot v')\in [T]$, which
implies that $\rho \leadsto \lambda'\cdot v'$. Since $\leadsto\;\subseteq
\PlaysFin^2$, 
 $\lambda'\cdot v'\in \PlaysFin$, hence $v'\in
I(\rho)$.

For the right-left inclusion, take $v'\in
I(\rho)$. There exists $\rho'$ such that
$\rho'\cdot v'\in \PlaysFin$ and
$\rho \leadsto \rho'\cdot v'$. By definition
of $T$, there exists $q\in Q_F$ such that
$q_i\labedgeext{\rho}{\rho'\cdot v'}q$.
By Lemma~\ref{lem-last}, $q\in S$, and $v'\in \Last(q)$. Since $q\in
S\cap Q_F$, $v'\in \pow{v}.I$.

\subsection{Lifting transducers}
\label{sec-lift-trans}

Let $\ga$ be an arena, $T$ an FST such that $[T]\subseteq
\PlaysFin^2$, and let $\pow{\ga}=\power{\ga}{T}$.  We describe how to
build a transducer $\pow{T}$ that lifts $[T] \subseteq \PlaysFin
\times \PlaysFin$ to $\pow{\PlaysFin}\times\pow{\PlaysFin}$.

We note $\Tdown$ for the deterministic transducer that computes $f$,
the bijection that maps a
play $\pow{\rho}\in\pow{\PlaysFin}$ to the underlying play
$\rho\in\PlaysFin$, and $\Tup$ for the deterministic transducer
that computes $f^{-1}$. Both are easily built from $\pow{\ga}$, and 
$|\Tdown|=|\Tup|=O(|\pow{\ga}|)$.

\begin{definition}
\label{def-lift-trans}
The \emph{lift} of transducer $T$ is $\pow{T}=\Tdown \circ T \circ \Tup$.
\end{definition}

\noindent Notice that $|\pow{T}|=O(|\pow{\ga}|\times|T|\times |\pow{\ga}|)$. \\

In the following we let $\leadsto\; = [T]$ and $\pow{\leadsto}\;=[\pow{T}]$.
The following  proposition follows directly from the definitions of $\pow{\ga}$ and $\pow{T}$:



\begin{proposition}
\label{prop-pow-equiv}
For every $\phi\in\lang$, $\pi\in\PlaysInf$,
$i\geq 0$,
\[\pi,i\models_\leadsto \phi \mbox{ iff
}\pow{\pi},i\models_{\pow{\leadsto}} \phi\]

\end{proposition}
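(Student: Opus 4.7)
The plan is a straightforward structural induction on $\phi \in \lang$, using the fact that $f:\PlaysInf \to \pow{\PlaysInf}$ is a bijection that preserves positions componentwise (in the sense that $\pow{\pi}[i].v = \pi[i]$) and hence labels, since $\pow{\ell}(\pow{v}) = \ell(v)$. The base case $\phi = p$ is immediate from this label-preservation property, and the Boolean, Next, and Until cases propagate directly through the inductive hypothesis, because the position-indexed semantics is identical on both sides once we match $\pi[i]$ with $\pow{\pi}[i]$ via $f$.

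The only nontrivial case is $\phi = \R\psi$. Here the goal reduces to showing that, for all $i,j$ and all $\pi,\pi' \in \PlaysInf$,
\[
\pi[0,i] \leadsto \pi'[0,j] \ \Longleftrightarrow\ \pow{\pi}[0,i] \;\pow{\leadsto}\; \pow{\pi'}[0,j].
\]
Once this is established, combining it with the induction hypothesis applied to $\psi$ yields the equivalence for $\R\psi$, since the universe in both semantics is the full set of plays (we are in the fully-uniform semantics), and $f$ is a bijection that commutes with taking prefixes.

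To obtain the displayed equivalence, I would unfold $\pow{\leadsto} = [\pow{T}]$ using Definition~\ref{def-lift-trans}, namely $\pow{T} = \Tdown \circ T \circ \Tup$, and invoke Property~\ref{prop-comp} which gives $[\pow{T}] = [\Tdown] \circ [T] \circ [\Tup]$. Since $\Tdown$ and $\Tup$ are the deterministic transducers computing the bijections $f$ and $f^{-1}$ respectively, $[\Tdown]$ is the graph of $f$ and $[\Tup]$ is the graph of $f^{-1}$. Thus $\pow{\rho} \pow{\leadsto} \pow{\rho'}$ holds iff there exist intermediate words $w,w'$ with $\pow{\rho} = f^{-1}(w)$, $w \leadsto w'$, and $w' = f(\pow{\rho'})$, which by determinism forces $w = \rho$ and $w' = \rho'$, giving $\rho \leadsto \rho'$. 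The converse direction is analogous by composing along $f^{-1}, \leadsto, f$.

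The main obstacle is purely notational, namely being careful about the ``artificial'' prefix induced by $\pow{v}_{-1}$ when aligning $\pow{\rho}$ with $\rho$: the bijection $f$ is defined so that $\pow{\pi}[0] = \pow{v}_I$ corresponds to $\pi[0] = v_I$, and this alignment commutes with taking finite prefixes $[0,i]$, which is what makes the prefix-level application of the composition argument valid. Once this bookkeeping is clear, the induction closes and Proposition~\ref{prop-pow-equiv} follows.
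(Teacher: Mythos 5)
Your proof is correct and is essentially the argument the paper has in mind: the paper omits a proof, stating only that the proposition ``follows directly from the definitions of $\pow{\ga}$ and $\pow{T}$,'' and your structural induction, with the key prefix-level equivalence $\rho \leadsto \rho' \Leftrightarrow \pow{\rho}\;\pow{\leadsto}\;\pow{\rho'}$ obtained by unfolding $\pow{T}=\Tdown\circ T\circ\Tup$ via Property~\ref{prop-comp} and the determinism of $\Tdown$ and $\Tup$, is exactly the intended justification. Only a notational caution: the paper uses $f$ in Section~\ref{sec-powerset-arena} for the upward map $\PlaysInf\to\pow{\PlaysInf}$ but in Section~\ref{sec-lift-trans} for the downward map computed by $\Tdown$, so fix one convention when writing this up.
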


\subsection{$\R$-elimination}
\label{sec-elimination}


We establish that given an instance of
$\fus{n+1}$, we can build in exponential space and time an equivalent instance of $\fus{n}$. 

\begin{proposition}
\label{prop-decr}

For all instance $(\ga,T,\phi)$ of $\fus{n+1}$, there exists an
instance $(\ga',T',\phi')$ of $\fus{n}$ computable in time exponential in
$|(\ga,T,\phi)|$ such that:

\begin{itemize}
\item $(\ga,T,\phi)\in\fus{n+1}$ iff $(\ga',T',\phi')\in\fus{n}$
\item $|\ga'|=O(2^{(|\ga|+|T|)^2})$ 
\item $|T'|= O(2^{O(|\ga|+|T|)^2})$
\item $|\phi'|=O(|\phi|)$
\end{itemize}

\end{proposition}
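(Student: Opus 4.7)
The plan is to eliminate one layer of $\R$-modalities by precomputing the value of every innermost $\R$-subformula as a fresh atomic proposition on the powerset arena, which reuses the constructions of Sections~\ref{sec-powerset-arena} and~\ref{sec-lift-trans}.

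Concretely, I would first build $\pow{\ga} = \power{\ga}{T}$ and the lifted transducer $\pow{T}$, so that by Proposition~\ref{prop-infoset} the local information set $\pow{v}.I$ stored at each position coincides with the true information set $I(\rho)$ for every finite play $\rho$ ending at $\pow{v}$, and by Proposition~\ref{prop-pow-equiv} the bijection $\pi \leftrightarrow \pow{\pi}$ transports $\lang$-truth between $(\ga, [T])$ and $(\pow{\ga}, [\pow{T}])$. Next I would enumerate the innermost $\R$-subformulas $\R\psi_1,\ldots,\R\psi_k$ of $\phi$ (each $\psi_j \in \ltl$ by minimality), introduce one fresh atomic proposition $p_j$ per $\R\psi_j$, and enrich the valuation of $\pow{\ga}$ by the rule
\[
  p_j \in \pow{\val}(\pow{v}) \iff \forall v' \in \pow{v}.I,\; \ga, v' \models \psi_j,
\]
which is decidable for each pair $(\pow{v}, v')$ by classical LTL model-checking. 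Setting $\phi' := \phi[\R\psi_1/p_1, \ldots, \R\psi_k/p_k]$ yields a formula of $\R$-depth exactly $n$ with $|\phi'| = O(|\phi|)$; I would take $\ga'$ to be $\pow{\ga}$ equipped with the enriched valuation and $T' := \pow{T}$.

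The correctness pivot is the claim that $\pi, i \models_{\leadsto} \R\psi_j$ holds iff $p_j \in \pow{\val}(\pow{\pi}[i])$, for every $\pi \in \PlaysInf$, every $i \geq 0$ and every $j$. Since $\psi_j$ is pure LTL, its truth at $(\pi', j)$ depends only on the suffix $\pi'[j, \infty]$; so the left-hand side unfolds to requiring that $\ga, v' \models \psi_j$ for every $v' \in I(\pi[0,i])$, which by Proposition~\ref{prop-infoset} matches the defining condition of $p_j$ at $\pow{\pi}[i]$. A routine induction on the structure of $\phi$, combined with Proposition~\ref{prop-pow-equiv}, then upgrades this pointwise equivalence into $\pi, 0 \models_{\leadsto} \phi \iff \pow{\pi}, 0 \models_{\pow{\leadsto}} \phi'$ in $\ga'$. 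Because strategies for Player~1 in $\ga$ and in $\pow{\ga}$ are in natural bijection (via $f$) and this bijection maps outcomes to outcomes, the existence of a $([T], \phi)$-fully-uniform strategy in $\ga$ is equivalent to the existence of a $([\pow{T}], \phi')$-fully-uniform strategy in $\ga'$.

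The size bounds $|\ga'| = O(2^{(|\ga|+|T|)^2})$ and $|T'| = O(2^{O((|\ga|+|T|)^2)})$ follow from the inequalities already noted for $|\pow{\ga}|$ and $|\pow{T}|$, while $|\phi'| = O(|\phi|)$ is immediate. The dominant cost of the whole construction is tabulating the $p_j$'s over $\pow{\ga}$, for which an LTL model-checker on $\ga$ suffices, and this runs in time singly exponential in $|(\ga, T, \phi)|$. The main technical obstacle will be the correctness equivalence above, precisely because $\R\psi_j$ quantifies globally through $\leadsto$ while $p_j$ is local to a position of $\pow{\ga}$; it is the powerset construction together with the lifting of $T$ that packages exactly the right amount of global information into each position to make this local summary sound.
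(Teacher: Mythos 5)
Your proposal is correct and follows essentially the same route as the paper's own proof: you build the powerset arena and lifted transducer, mark fresh atomic propositions at positions whose local information set uniformly satisfies the innermost LTL subformulas (your ``correctness pivot'' is exactly the paper's Lemmas~\ref{lem-R-info} and~\ref{lem-position}), upgrade the pointwise equivalence by structural induction (Lemma~\ref{lemma-decr}), and conclude via the outcome-preserving strategy bijection with the same size bounds and exponential-time accounting. No substantive differences or gaps to report.
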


\noindent The rest of the section is dedicated to the proof of Proposition~\ref{prop-decr}. Let $(\ga,T,\phi)$ be an instance of $\fus{n+1}$.


\begin{lemma}
  \label{lem-R-info}
  Let $\pi\in\PlaysInf$, $i\geq 0$, and $\phi$ be an
  $\ltl-$formula. 
 \[\pi,i\models_\leadsto \R\phi \mbox{ iff }
 \ga,u\models \phi \mbox{ for all }u\in I(\pi[0,i]).\]
\end{lemma}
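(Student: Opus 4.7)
The plan is to prove both directions directly from the definitions, leveraging the fact that since $\phi \in \ltl$ has no $\R$ modality, the universe $\Pi$ is irrelevant and satisfaction $\pi',j \models \phi$ depends only on the suffix $\pi'[j,\infty]$. No hard step arises; the argument is essentially bookkeeping between finite plays, their extensions, and the $I(\cdot)$ notation.

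For the forward direction, assume $\pi,i \models_\leadsto \R\phi$ and pick $u \in I(\pi[0,i])$. By Definition~\ref{def-infoset}, there is a finite play $\rho' \cdot u \in \PlaysFin$ such that $\pi[0,i] \leadsto \rho' \cdot u$. To show $\ga, u \models \phi$, take any $\pi^u \in \TracesInf(u)$ and build the infinite play
\[
\pi'' \;:=\; \rho' \cdot \pi^u
\]
(that is, $\pi''[0,|\rho'|] = \rho' \cdot u$ and $\pi''[|\rho'|,\infty] = \pi^u$). Note that $\pi'' \in \PlaysInf$ since $\rho' \cdot u$ starts at $v_I$. Setting $j := |\rho'|$, we have $\pi[0,i] \leadsto \pi''[0,j]$, so by assumption $\pi'',j \models_\leadsto \phi$. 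Since $\phi$ is an $\ltl$-formula, this is equivalent to $\pi^u, 0 \models \phi$. As $\pi^u \in \TracesInf(u)$ was arbitrary, $\ga, u \models \phi$.

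For the backward direction, assume $\ga, u \models \phi$ for every $u \in I(\pi[0,i])$. To verify $\pi,i \models_\leadsto \R\phi$, pick any $\pi' \in \PlaysInf$ and $j \geq 0$ with $\pi[0,i] \leadsto \pi'[0,j]$. Then $\pi'[0,j] \in \PlaysFin$ witnesses that $\pi'[j] \in I(\pi[0,i])$, so by hypothesis $\ga, \pi'[j] \models \phi$. Since $\pi'[j,\infty] \in \TracesInf(\pi'[j])$, we get $\pi'[j,\infty], 0 \models \phi$, and as $\phi$ contains no $\R$ this is exactly $\pi', j \models_\leadsto \phi$, as required.

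The only subtlety worth flagging is the need to exhibit a full infinite play $\pi''$ in the forward direction (the semantics of $\R$ quantifies over $\PlaysInf$, not $\PlaysFin$); the concatenation $\rho' \cdot \pi^u$ handles this cleanly. Once this is observed, both directions reduce to rewriting the definitions.
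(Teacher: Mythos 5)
Your proof is correct and follows essentially the same route as the paper's: the forward direction constructs the infinite play $\rho' \cdot \pi^u$ from the witness $\rho' \cdot u \in \PlaysFin$ given by the definition of $I(\pi[0,i])$ and uses the suffix-dependence of $\ltl$ satisfaction, and the backward direction unfolds the semantics of $\R$ exactly as in the paper. The subtlety you flag (exhibiting a full infinite play since $\R$ quantifies over $\PlaysInf$) is handled by the paper in the identical way.
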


\begin{proof}
  We start with the left-right implication. Suppose that
  $\pi,i\models \R\phi$ holds, and take $u\in I(\pi[0,i])$.
  We need to prove that $\ga,u\models\phi$. To do so, we take
  $\pi'\in\TracesInf(u)$  an infinite trace starting in $u$ and we
  prove that $\pi',0\models\phi$.
  Since $u\in I(\pi[0,i])$, by definition of the information set, there exists
  $\rho\cdot u\in\PlaysFin$ such that $\pi[0,i]\leadsto\rho\cdot u$. We
  let $j=|\rho|$ and $\pi''=\rho\cdot\pi'$ be such that $\pi''[j]=u$.
  Clearly, $\pi'' \in \PlaysInf$, and $\pi[0,i]\leadsto\pi[0,j]$.
  Since $\pi,i\models \R\phi$ holds,
  we have that $\pi'',j\models \phi$. And because
  $\phi\in\ltl$, the fact that it holds at some point of a trace only
  depends on the future of this point, hence
  $\pi''[j,\infty],0\models \phi$, \ie $\pi',0\models \phi$.

For the right-left implication, suppose that $ \ga,u\models \phi
\mbox{ for all }u\in I(\pi[0,i])$, and take $\pi'\in\PlaysInf$,
$j\geq 0$ such that $\pi[0,i]\leadsto\pi'[0,j]$. We have that $\pi'[j]\in
I(\pi[0,i])$, so $\ga,\pi'[j]\models \phi$. Because $\pi'[j,\infty]$
is in $\TracesInf(\pi'[j])$, we have that $\pi'[j,\infty],0\models
\phi$, hence $\pi',j\models \phi$.
\end{proof}

\begin{lemma}
  \label{lem-position}
  Let $\pow{\pi}\in\pow{\PlaysInf}$, $i\geq 0$, and let $\phi$ be an
  $\ltl-$formula. 

 \[\pow{\pi},i\models_{\pow{\leadsto}} \R\phi \mbox{ iff }
 \ga,u\models \phi \mbox{ for all }u\in \pow{\pi}[i].I.\]

\end{lemma}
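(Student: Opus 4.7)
The plan is to chain together three results already established in the paper: Proposition~\ref{prop-pow-equiv} (which transfers satisfaction between $\pi$ and $\pow{\pi}$), Lemma~\ref{lem-R-info} (which characterizes satisfaction of $\R\phi$ for $\ltl$-formulas $\phi$ via the information set), and Proposition~\ref{prop-infoset} (which identifies the local information set $\pow{v}.I$ at a position of the powerset arena with the semantic information set $I(\rho)$ of the underlying play).

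More precisely, I would argue as follows. Fix $\pow{\pi}\in\pow{\PlaysInf}$ and $i\geq 0$, and let $\pi=f^{-1}(\pow{\pi})$ be the underlying play in $\ga$. First, by Proposition~\ref{prop-pow-equiv} applied to the formula $\R\phi\in\lang$, we have
\[\pow{\pi},i\models_{\pow{\leadsto}} \R\phi \quad\text{iff}\quad \pi,i\models_{\leadsto} \R\phi.\]
Next, since $\phi\in\ltl$, Lemma~\ref{lem-R-info} yields
\[\pi,i\models_{\leadsto} \R\phi \quad\text{iff}\quad \ga,u\models\phi \text{ for all } u\in I(\pi[0,i]).\]
Finally, applying Proposition~\ref{prop-infoset} to the finite play $\pow{\pi}[0,i]\in\pow{\PlaysFin}$, whose underlying finite play is exactly $\pi[0,i]$, we obtain
\[\last(\pow{\pi}[0,i]).I = I(\pi[0,i]).\]
But $\last(\pow{\pi}[0,i]) = \pow{\pi}[i]$ by definition, so $\pow{\pi}[i].I = I(\pi[0,i])$. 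Chaining the three equivalences gives the desired statement.

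No step presents a real obstacle since all the content has been packaged in earlier results; the only thing to double-check is that the bijection $f$ is compatible with prefixes (i.e.\ that $f^{-1}(\pow{\pi}[0,i])=\pi[0,i]$), which is immediate from the construction of $\pow{\ga}$ where each step $\pow{u}\,\pow{\to}\,\pow{v}$ projects to the step $u\to v$ in $\ga$. Thus the proof reduces to correctly invoking the three preceding results in sequence, with no additional combinatorial work needed.
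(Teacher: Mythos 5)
Your proof is correct and coincides with the paper's own argument, which chains Proposition~\ref{prop-pow-equiv}, Lemma~\ref{lem-R-info}, and Proposition~\ref{prop-infoset} in exactly the same order. Your extra check that the bijection $f$ is prefix-compatible (so that $\last(\pow{\pi}[0,i]).I=\pow{\pi}[i].I=I(\pi[0,i])$) is left implicit in the paper but is indeed immediate from the construction of $\pow{\ga}$.
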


\begin{proof} Let $\pow{\pi}\in\pow{\PlaysInf}$ and $i\geq 0$. By Proposition~\ref{prop-pow-equiv}, for any $\phi \in LTL$, 
  $\pow{\pi},i\models_{\pow{\leadsto}} \R\phi \mbox{ iff
  }\pi,i\models_\leadsto \R\phi$, and by Lemma~\ref{lem-R-info},
  $\pi,i\models_\leadsto \R\phi \mbox{ iff } \ga,u\models \phi \mbox{
    for all }u\in I(\pi[0,i])$. Now, by Proposition~\ref{prop-infoset},
  $\pow{\pi}[i].I=I(\pi[0,i])$, which concludes the proof.
\end{proof}

We now define how formulas of the kind $\R\phi$ can be replaced by new
atomic propositions, and how positions of the powerset arena can be
marked with these new propositions: 
To an arena $\ga$, a formula $\phi\in\lang$ and a subformula
$\R\psi\in\langn{1}\cap\subf(\phi)$ (if any), we associate a fresh
atomic proposition $p_{\R\psi}$ that occurs neither in $\ga$ nor in
$\phi$.

\begin{definition}
\label{def-newphi}
For $\phi\in\langn{n+1}$, we define $\pow{\phi}:=\phi[p_{\R\psi}/\R\psi \mid \R\psi\in\langn{1}\cap
  \subf(\phi)]$.
\end{definition}

\begin{example}
$\pow{\R \fullmoon \R q}=\R \fullmoon p_{\R q}$ 
\end{example}


\begin{definition}
\label{def-decr}
For an instance $(\ga,T,\phi)$ of $\fus{n+1}$, we define the instance
$\pow{(\ga,T,\phi)}$ of $\fus{n}$ as $(\ga',T',\phi')$ by:
\begin{itemize}
\item If $\pow{\ga}=(\pow{V},\pow{E},\pow{v_I},\pow{\val})$, then
  $\ga'=(\pow{V},\pow{E},\pow{v_I},\pow{\val}')$, with

{\centering
  
$\pow{\val}'(\pow{v})=\pow{\val}(\pow{v})\cup \{p_{\R\psi}\mid
\R\psi\in\langn{1}\cap\subf(\phi)\mbox{ and }\forall u\in\pow{v}.I, \ga,u\models
\psi\}.$

}

\item $T'=\pow{T}$
\item $\phi'=\pow{\phi}$
\end{itemize}
\end{definition}  

From now on, for an instance $(\ga,T,\phi)$ of $\fus{n+1}$, we abuse
notation by writing $\pow{\ga}=(\pow{V},\pow{E},\pow{v_I},\pow{\val})$
for the \emph{modified} powerset construction of
Definition~\ref{def-decr}.

\begin{lemma}
  \label{lemma-decr}
  Take an instance $(\ga,T,\phi)$ of $\fus{n+1}$, and let
  $(\pow{\ga},\pow{T},\pow{\phi})=\pow{(\ga,T,\phi)}$.
  Then for all $\pi\in\PlaysInf$ and $i\geq 0$,
  \[\pi,i\models_\leadsto \phi
  \mbox{ iff }\pow{\pi},i\models_{\pow{\leadsto}} \pow{\phi} \]
\end{lemma}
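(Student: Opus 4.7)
The plan is to prove the lemma by structural induction on $\phi$, where the substitution of innermost $\R$-subformulas by fresh propositions is ``absorbed'' by the enriched valuation $\pow{\val}'$ at the base case, and the other cases are essentially routine.

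For the base case $\phi = p$ with $p \in \AP$ an original atomic proposition, we have $\pow{\phi} = p$ and, by Definition~\ref{def-powerset}, $\pow{\val}(\pow{v}) = \val(v)$. Since $\pow{\val}'$ agrees with $\pow{\val}$ on old atomic propositions (it only adds fresh atoms $p_{\R\psi}$), the equivalence is immediate. The key base case is $\phi = \R\psi$ with $\R\psi \in \langn{1}$, i.e., $\psi \in LTL$. Here $\pow{\phi} = p_{\R\psi}$, and we chain three equivalences: by Proposition~\ref{prop-pow-equiv}, $\pi,i\models_\leadsto \R\psi$ iff $\pow{\pi},i\models_{\pow{\leadsto}} \R\psi$; by Lemma~\ref{lem-position}, the latter holds iff $\ga,u\models\psi$ for every $u\in\pow{\pi}[i].I$; and by the definition of $\pow{\val}'$ in Definition~\ref{def-decr}, this in turn is equivalent to $p_{\R\psi}\in\pow{\val}'(\pow{\pi}[i])$, i.e.\ $\pow{\pi},i\models_{\pow{\leadsto}} p_{\R\psi}$.

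The Boolean and temporal cases ($\neg$, $\wedge$, $\fullmoon$, $\until$) are standard: the substitution operator commutes with each such constructor, and the $LTL$ semantic clauses only recurse on strict subformulas (at some point $j \geq i$ of the same trace), so the induction hypothesis applies directly. The remaining inductive case is $\phi = \R\psi$ with $\depth(\R\psi) \geq 2$, in which case $\R\psi \notin \langn{1}$ is not itself replaced and so $\pow{\phi} = \R\pow{\psi}$. Unfolding the semantics, $\pi,i\models_\leadsto \R\psi$ quantifies over pairs $(\pi',j)$ with $\pi[0,i]\leadsto \pi'[0,j]$, while $\pow{\pi},i\models_{\pow{\leadsto}} \R\pow{\psi}$ quantifies over pairs $(\pow{\pi''},j')$ with $\pow{\pi}[0,i]\,\pow{\leadsto}\,\pow{\pi''}[0,j']$. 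By the definition $\pow{T} = \Tdown \circ T \circ \Tup$ (Definition~\ref{def-lift-trans}) and the fact that $\Tdown,\Tup$ compute the bijection $f$ and its inverse, we obtain $\pow{\rho}\,\pow{\leadsto}\,\pow{\rho'}$ iff $\rho \leadsto \rho'$. Hence the two quantifications are in bijective correspondence via $f$, and by the induction hypothesis applied to the strict subformula $\psi$, the inner conditions $\pi',j\models_\leadsto \psi$ and $\pow{\pi'},j\models_{\pow{\leadsto}}\pow{\psi}$ coincide.

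The main obstacle is the clean handling of the $\R$-inductive step: one must verify that the universal quantifier ``for all related plays'' is preserved when passing from $\leadsto$ on $\PlaysFin$ to $\pow{\leadsto}$ on $\pow{\PlaysFin}$. This is essentially where the lift construction earns its keep; once one writes the composition $\pow{T} = \Tdown \circ T \circ \Tup$ out explicitly, the correspondence $\pow{\rho}\,\pow{\leadsto}\,\pow{\rho'} \iff \rho \leadsto \rho'$ is immediate, and the rest is bookkeeping. It is worth recording this correspondence as a small auxiliary fact (or as part of Proposition~\ref{prop-pow-equiv}) before entering the induction, so that the $\R$-case in both flavours ($\R\psi \in \langn{1}$ and $\depth(\R\psi) \geq 2$) can be dispatched cleanly.
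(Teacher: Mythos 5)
Your proof is correct, and it reaches the same conclusion through a slightly different decomposition than the paper's. The paper applies Proposition~\ref{prop-pow-equiv} once, to the whole formula $\phi$, reducing the lemma to the single-arena equivalence $\pow{\pi},i\models_{\pow{\leadsto}}\phi$ iff $\pow{\pi},i\models_{\pow{\leadsto}}\pow{\phi}$, and then inducts entirely inside the powerset arena; in the inductive case $\phi=\R\psi$ with $\depth(\psi)>0$, both sides of the equivalence then quantify over the \emph{same} $\pow{\leadsto}$-related plays, so the induction hypothesis applies in one line and no correspondence between $\leadsto$ and $\pow{\leadsto}$ is ever needed inside the induction. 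You instead induct on the cross-arena statement itself, which forces you to establish in the $\R$-inductive step that $\pow{\rho}\;\pow{\leadsto}\;\pow{\rho'}$ iff $\rho\leadsto\rho'$, together with the (implicitly used) compatibility of the bijection $f$ with prefixes, i.e.\ $\pow{\pi}[0,j]=\pow{\pi[0,j]}$, which holds because the powerset construction is deterministic along the underlying position sequence. As you observe, this correspondence is immediate from Definition~\ref{def-lift-trans}; it is exactly the content the paper leaves implicit in the unproved Proposition~\ref{prop-pow-equiv}, so your suggestion to record it as an explicit auxiliary fact is a fair presentational improvement rather than a gap. Your treatment of the replaced case $\R\psi$ with $\psi\in\ltl$ (chaining Proposition~\ref{prop-pow-equiv}, Lemma~\ref{lem-position}, and Definition~\ref{def-decr}) coincides with the paper's subcase $\depth(\psi)=0$, and your remark that $\pow{\val}'$ agrees with $\pow{\val}$ on the original atomic propositions correctly discharges the base case $\phi=p$, which the paper dismisses as trivial. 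In short: the paper's up-front factoring buys a cleaner induction confined to $\pow{\ga}$; your version carries the two arenas through the induction but makes explicit the bijective transfer of the relation, and both are sound.
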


\begin{proof}
  By Proposition~\ref{prop-pow-equiv}, $\pi,i\models_\leadsto \phi
  \mbox{ iff }\pow{\pi},i\models_{\pow{\leadsto}} \phi$, so it only
  remains to show that $\pow{\pi},i\models_{\pow{\leadsto}} \phi$ iff
  $\pow{\pi},i\models_{\pow{\leadsto}} \pow{\phi}$. We prove it by
  induction on $\phi$. The cases $\phi=p, \phi=\neg \psi, \phi=\psi\vee\psi',
  \phi=\fullmoon\psi, \phi=\psi\until\psi'$ are trivial. 
  It remains to consider the case $\phi=\R\psi$, which
  decomposes into two subcases depending on $\depth(\psi)$:
    \begin{itemize}
    \item If $\depth(\psi)>0$, then
    $\pow{\phi}=\R\pow{\psi}$.
    We then have:
    \begin{align*}
      \pow{\pi},i\models_{\pow{\leadsto}} \R\psi &\mbox{ iff }\forall
      \pow{\pi}',j \mbox{
        s.t. }\pow{\pi}[0,i]\;\pow{\leadsto}\;\pow{\pi}'[0,j],\; \pow{\pi}',j\models_{\pow{\leadsto}}\psi
      \\
      &\mbox{ iff }\forall
      \pow{\pi}',j \mbox{
        s.t. }\pow{\pi}[0,i]\;\pow{\leadsto}\;\pow{\pi}'[0,j],\;\pow{\pi}',j\models_{\pow{\leadsto}}\pow{\psi}
      \mbox{\hspace{2cm}(by induction hypothesis)}\\
      &\mbox{ iff }\pow{\pi},i\models_{\pow{\leadsto}} \R\pow{\psi}
    \end{align*}
    
    \item If $\depth(\psi)=0$, that is $\psi\in LTL$, then $\pow{\phi}=p_{\R\psi}$.
      \begin{align*}\pow{\pi},i\models_{\pow{\leadsto}} \R\psi
      &\mbox{ iff }  \ga,u\models \psi \mbox{ for all }u\in\pow{\pi}[i].I
      &\mbox{\hspace{2cm}(by Lemma~\ref{lem-position})}\\
      &\mbox{ iff } p_{\R\psi}\in\pow{\val}(\pow{\pi}[i])
      &\mbox{\hspace{2cm}(by Definition~\ref{def-decr})}\\
      &\mbox{ iff } \pow{\pi},i\models_{\pow{\leadsto}} p_{\R\psi} &
    \end{align*}
  \end{itemize}
\end{proof}

We can now achieve the proof of Proposition~\ref{prop-decr}.  Take an
instance $(\ga,T,\phi)$ of $\fus{n+1}$. We show that
$(\pow{\ga},\pow{T},\pow{\phi})=\pow{(\ga,T,\phi)}$ is a good candidate.  Notice that the natural
bijection between $\PlaysFin$ and $\pow{\PlaysFin}$ induces a
bijection between strategies $\sigma$ in $\ga$ and strategies
$\pow{\sigma}$ in $\pow{\ga}$, such that for every strategy $\sigma$
in $\ga$, if we note $\pow{\out(\sigma)}:=\{\pow{\pi}\mid
\pi\in\out(\sigma)\}$, then $\out(\pow{\sigma})=\pow{\out(\sigma)}$.

Let $\sigma$ be $([T],\phi)$-fully-uniform in $\ga$. If $\pow{\pi}\in\out(\pow{\sigma})$, then
$\pow{\pi}\in\pow{\out(\sigma)}$, hence $\pi\in\out(\sigma)$. Because
$\sigma$ is $([T],\phi)$-fully-uniform,
$\pi,0\models_\leadsto\phi$. By Lemma~\ref{lemma-decr} we conclude
that $\pow{\pi},0\models_{\pow{\leadsto}}\pow{\phi}$, which means that 
$\pow{\sigma}$ is $([\pow{T}],\pow{\phi})$-fully-uniform in
$\pow{\ga}$. Since $\depth(\pow{\phi})=\depth(\phi)-1=n$, $\pow{(\ga,T,\phi)} \in \fus{n}$.

Assume $\pow{(\ga,T,\phi)} \in \fus{n}$, that is there exists a
strategy $\pow{\sigma}$ which is
$([\pow{T}],\pow{\phi})$-fully-uniform in $\pow{\ga}$. Any play $\pi
\in \out(\sigma)$ is uniquely associated to a play $\pow{\pi}\in
\pow{\out(\sigma)}=\out(\pow{\sigma})$, which by assumption satisfies
$\pow{\pi},0\models_{\pow{\leadsto}}\pow{\phi}$. By
Lemma~\ref{lemma-decr}, $\pi,0\models_{\leadsto}\phi$, which shows
that $\sigma$ is $([T],\phi)$-fully-uniform in $\ga$.

This achieves the proof of the first point of
Proposition~\ref{prop-decr}.  For the second point, recall
Definition~\ref{def-powerset} that gives $|\pow{\ga}|=O(|\ga|\times
2^{|T|}\times 2^{|T||\ga|})$, so that
$|\pow{\ga}|=O(2^{(|\ga|+|T|)^2})$. The third point is derived from
this second point and
Definition~\ref{def-lift-trans}: $|\pow{T}| =O(|\pow{\ga}|\times|T|\times |\pow{\ga}|) =O( 2^{(|\ga|+|T|)^2} \times |T|\times 2^{(|\ga|+|T|)^2} )  =O(2^{O(|\ga|+|T|)^2})$. 
Finally, the fourth point stating that $|\pow{\phi}|=O(|\phi|)$ is immediate by Definition~\ref{def-newphi}. \\

It remains to prove that $\pow{(\ga,T,\phi)}$ is computed in time
exponential in $|(\ga,T,\phi)|$. Clearly, the powerset construction
(Section~\ref{sec-powerset-arena}) and the lifting of the transducer
(Section~\ref{sec-lift-trans})  both take exponential time in
$|\ga|+|T|$, hence in $|(\ga,T,\phi)|$. The marking phase in the
$\R$-elimination (Definition~\ref{def-decr}) involves model checking at most $|\phi|$ $LTL$-formulas
on each position of the original arena $\ga$. Model checking an $LTL$-formula in a given position requires polynomial space in
$|\ga|+|\phi|$ \cite{sistla85}. Since \PSPACE\ $\subseteq$ \EXPTIME, it is exponential in
time.  All in all, we need to model check an $LTL$-formula at most $|\ga|\times |\phi|$ times, so
the whole marking phase is done in time exponential in
$|(\ga,T,\phi)|$. We conclude that
$\pow{(\ga,T,\phi)}$ can be computed in time exponential in
$|(\ga,T,\phi)|$, which terminates the proof of Proposition~\ref{prop-decr}.



\subsection{Complexity of $\fus{n}$}

In this subsection we describe an algorithm that decides whether
an instance $(\ga,T,\phi)$ is in $\FUS$, and we establish upper bounds
for the $\fus{n}$ problem, for each $n \in setn$.

Algorithm~\ref{fig-algo-FUS} describes our decision procedure. It takes
as an entry an instance $(\ga,T,\phi)$ of $\FUS$, and returns \texttt{true} if it is
a positive instance\footnote{i.e., there exists a $([T],\phi)$-fully-uniform strategy in
   $\ga$.}, \texttt{false} otherwise. To do so, starting from
$(\ga_0,T_0,\phi_0)=(\ga,T,\phi)$, it successively applies the
construction described in Subsection~\ref{sec-elimination} to
eliminate $\R$ operators in $\phi$ and to ultimately reduce the problem to solving an
equivalent $LTL$ game.

It is known that solving $LTL$ games has a time complexity
doubly-exponential in the size of the formula, and that it is actually
\2EXPTIME-complete \cite{pnueli89b}.  We remind that solving an $LTL$
game $(\ga,\phi)$, in the automata-theoretic formulation of this
problem \cite{vardi1991verification}, can be done by the following
procedure, that we will call in $LTLGameSolver$ (see
\cite{pnueli89b,alur2004deterministic}).  First, compute a
nondeterministic Büchi tree automaton that accepts trees whose
branches all verify the formula. This automaton is of size exponential
in $|\phi|$. Then, by for example Safra's construction \cite{safra88},
build an equivalent deterministic Rabin automaton $\mathcal A_\phi$
with a number of states doubly-exponential in $\phi$, and a number of
pairs exponential in $\phi$.  Then, with a linear cost in $|\ga|$,
transform the arena $\ga$ into a nondeterministic tree automaton
$\mathcal A_\ga$ that accepts all strategies of Player 1 in
$\ga$. Then, there exists a strategy whose outcomes all satisfy $\phi$
if and only if the product Rabin automaton $\mathcal A_\phi\times
\mathcal A_\ga$ accepts some tree. Deciding the emptiness of a Rabin
tree automaton can be done time $O((\ell m)^{3m})$, where $\ell$ is
the number of states and $m$ is the number of pairs of the Rabin
automaton \cite{rosner1991modular}. Provided that for the product
Rabin automaton $\mathcal A_\phi\times \mathcal A_\ga$ we have
$\ell=|\ga|\times 2^{2^{|\phi|}}$ and $m=2^{|\phi|}$, we finally obtain
the following upper bound:

\begin{proposition}
\label{prop-upbound-ltl}
Solving an $LTL$ game $(\ga,\phi)$ takes time
  $|\ga|^{2^{O(|\phi|)}}$.
\end{proposition}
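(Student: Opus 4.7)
The plan is to follow the automata-theoretic pipeline already sketched in the paragraph preceding the statement, and to do the arithmetic on the resulting parameters to confirm the claimed bound. Concretely, I would break the argument into four standard steps: (i) translate $\phi$ into a nondeterministic B\"uchi tree automaton $\mathcal B_\phi$ of size $2^{O(|\phi|)}$ accepting exactly the trees all of whose branches satisfy $\phi$; (ii) determinize via Safra's construction into a Rabin tree automaton $\mathcal A_\phi$ whose state space has size doubly-exponential in $|\phi|$ and whose acceptance has $2^{O(|\phi|)}$ pairs; (iii) view $\ga$ as a nondeterministic tree automaton $\mathcal A_\ga$ of size $O(|\ga|)$ whose accepted trees are the strategy-trees of Player~1; and (iv) take the product $\mathcal A_\phi \times \mathcal A_\ga$ and check emptiness, using Rosner's bound $O((\ell m)^{3m})$.

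The non-trivial part is really just bookkeeping. After the product, the number of states is $\ell = |\ga| \cdot 2^{2^{O(|\phi|)}}$ and the number of Rabin pairs stays $m = 2^{O(|\phi|)}$, because taking a product with a nondeterministic automaton multiplies state spaces but does not change the acceptance condition of the deterministic factor. Plugging these into the emptiness bound,
\[
(\ell m)^{3m} \;=\; \bigl(|\ga| \cdot 2^{2^{O(|\phi|)}} \cdot 2^{O(|\phi|)}\bigr)^{3 \cdot 2^{O(|\phi|)}}
\;=\; |\ga|^{2^{O(|\phi|)}} \cdot 2^{2^{O(|\phi|)}},
\]
and the second factor is absorbed into $|\ga|^{2^{O(|\phi|)}}$ (it corresponds to the case $|\ga|=2$). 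This yields exactly the announced bound $|\ga|^{2^{O(|\phi|)}}$.

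The main ``obstacle'' here is not mathematical but bibliographical: one has to invoke the correct off-the-shelf constructions and cite them carefully, namely the LTL-to-B\"uchi translation and the sizes guaranteed by Safra's determinization (as used in \cite{pnueli89b,alur2004deterministic}) together with Rosner's emptiness algorithm for Rabin tree automata \cite{rosner1991modular}. Since every ingredient is standard and the parameters compose as above, no further argument is needed; the computation of $(\ell m)^{3m}$ is the only calculation to perform, and it collapses cleanly to $|\ga|^{2^{O(|\phi|)}}$.
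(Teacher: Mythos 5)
Your proposal is correct and follows essentially the same route as the paper: the identical pipeline (B\"uchi tree automaton of size $2^{O(|\phi|)}$, Safra determinization into a Rabin automaton, product with the strategy-tree automaton for $\ga$, and Rosner's $O((\ell m)^{3m})$ emptiness check with $\ell=|\ga|\cdot 2^{2^{O(|\phi|)}}$ and $m=2^{O(|\phi|)}$). Your explicit absorption of the factor $2^{2^{O(|\phi|)}}$ into $|\ga|^{2^{O(|\phi|)}}$ (valid for $|\ga|\geq 2$) is the one small calculation the paper leaves implicit, and it is done correctly.
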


  It is
  important to keep the size of the arena and the size of the formula
  apart for the moment, instead of just saying that it is
  doubly-exponential in the size of the entry, because in our decision procedure, the size of the
  iterated powerset constructions suffers a exponential blow-up, contrary to the successive formulas
  whose sizes remains unchanged (and even decrease since
  subformulas are replaced with atomic propositions).

\begin{algorithm}[h]
  \SetAlgoLined
  \KwIn{$(\ga,T,\phi)$}
  \KwOut{\texttt{true} if $(\ga,T,\phi)\in\FUS$, \texttt{false}
    otherwise}
  $(\ga_0,T_0,\phi_0):=(\ga,T,\phi)$\;
  $k:=0$\;
  \While{$\depth(\phi_k)>0$}{
    $(\ga_{k+1},T_{k+1},\phi_{k+1}):= \pow{(\ga_k,T_k,\phi_k)}$\;
    $k:=k+1$\;
  }
  \KwRet{LTLGameSolver$(\ga_k,\phi_k)$}\vspace{3pt}
\caption{Decision procedure for the problem \FUS.}
\label{fig-algo-FUS}

\end{algorithm}

Theorem~\ref{theo-fusn} as announced at the beginning of
Section~\ref{sec-decision} can now be proved.

Let $n\in\mathbb N$, and let $(\ga,T,\phi)$ be an instance of
$\fus{n}$.  If $n=0$, the body of the {\bf while} instruction is not
executed, and we immediately call $LTLGameSolver(\ga_0,\phi_0)$. By
Proposition~\ref{prop-upbound-ltl}, this call takes time
$|\ga_0|^{2^{O(|\phi_0|)}}$, hence it is $2$-\EXPTIME\ in
$|(\ga,T,\phi)|$. We next answer the case $n>0$, and will distinguish the two
particular cases $n=1$ and $n=2$.  

For convenience, we introduce notations for iterated exponential
functions: for $k,n\in\mathbb N$,
$\itexp{k}{n}=2^{2^{\cdots^{2^{n}}}}\big\}\small{k}$.

 



\begin{lemma}
\label{lem-Gi}
For every $0\leq k\leq n$, $|\ga_k|=|T_k|=\itexp{k}{O(|\ga|+|T|)^2}$.
\end{lemma}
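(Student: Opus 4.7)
The plan is a straightforward induction on $k$, using the size bounds from Proposition~\ref{prop-decr} applied to each iterate $(\ga_k,T_k,\phi_k)$ produced by Algorithm~\ref{fig-algo-FUS}. The base case $k=0$ is immediate: $(\ga_0,T_0,\phi_0) = (\ga,T,\phi)$, and trivially $|\ga|$ and $|T|$ are each at most $(|\ga|+|T|)^2 = \itexp{0}{O(|\ga|+|T|)^2}$.

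For the inductive step, assume $|\ga_k| = |T_k| = \itexp{k}{O(|\ga|+|T|)^2}$. Since $(\ga_{k+1},T_{k+1},\phi_{k+1}) = \pow{(\ga_k,T_k,\phi_k)}$, Proposition~\ref{prop-decr} directly yields $|\ga_{k+1}| = O(2^{(|\ga_k|+|T_k|)^2})$ and $|T_{k+1}| = O(2^{O(|\ga_k|+|T_k|)^2})$. The goal is then to argue that plugging the inductive bound into these expressions lands us in $\itexp{k+1}{O(|\ga|+|T|)^2}$.

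The one quantitative point to verify is that polynomial operations (addition, squaring, multiplying by a constant) applied to an iterated exponential of height $k \geq 1$ yield another iterated exponential of the same height, at the cost of only an additive constant in the innermost exponent. Concretely, writing $f = O((|\ga|+|T|)^2)$, the induction hypothesis gives $|\ga_k| + |T_k| \leq 2\,\itexp{k}{f}$, and for $k \geq 1$ one has
\[
\bigl(\itexp{k}{f}\bigr)^2 \;=\; 2^{2\,\itexp{k-1}{f}} \;\leq\; 2^{\itexp{k-1}{f+1}} \;=\; \itexp{k}{f+1},
\]
so $(|\ga_k|+|T_k|)^2 \leq \itexp{k}{f+O(1)} = \itexp{k}{O(|\ga|+|T|)^2}$. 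Feeding this into the bound from Proposition~\ref{prop-decr} gives $|\ga_{k+1}|,|T_{k+1}| \leq 2^{\itexp{k}{O(|\ga|+|T|)^2}} = \itexp{k+1}{O(|\ga|+|T|)^2}$, as required.

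I do not expect any real obstacle here; the proof is essentially bookkeeping on the recursion already established by Proposition~\ref{prop-decr}. The only mild subtlety is tracking the $O$-constants through the iterated exponential: one must be a bit careful that the squaring step inside the exponential bump does not push the tower to height $k+2$, which is exactly what the identity $\itexp{k}{f}^2 \leq \itexp{k}{f+1}$ (for $k\geq 1$) ensures. The base case $k=0$ of that identity is handled separately since there $(|\ga|+|T|)^2$ is already the declared bound, so no squaring beyond the initial one is needed.
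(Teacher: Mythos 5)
Your proof is correct and takes exactly the paper's route: the paper's entire proof of this lemma is the one-liner ``by induction on $k$, using Proposition~\ref{prop-decr}'', and your argument simply makes the bookkeeping explicit. One nit on that bookkeeping: the identity $\bigl(\itexp{k}{f}\bigr)^2 \leq \itexp{k}{f+1}$ as you state it for $k\geq 1$ actually fails at $k=1$ (there $2^{2f} \not\leq 2^{f+1}$), but this is harmless since the doubling $f \mapsto 2f$ lands at the innermost level, where it is absorbed by the $O(\cdot)$ already present in the lemma's statement (and for $k\geq 2$ your additive bump does suffice), so the induction closes as claimed.
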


\begin{proof}
By induction on $k$, using Proposition~\ref{prop-decr}.
\end{proof}

If $(\ga,T,\phi)$ is an instance of $\fus{n}$, for $1\leq k\leq n$, by
Proposition~\ref{prop-decr}, the execution of the $k$-th loop takes
time exponential in $|(\ga_{k-1},T_{k-1},\phi_{k-1})|$. Hence by
Lemma~\ref{lem-Gi}, the time complexity for the $k$-th loop is
$\itexp{1}{\itexp{k-1}{O(|\ga|+|T|)^2}+|\phi|}=2^{|\phi|}\itexp{k}{O(|\ga|+|T|)^2}$,
and the execution of the whole {\bf while}
instruction takes time $\theta_{\mbox{\sc while}}$ where: 
\begin{align*}
\theta_{\mbox{\sc while}} 
& =\sum_{k=1}^n
2^{|\phi|}\itexp{k}{O(|\ga|+|T|)^2} \\
 & =2^{|\phi|}\itexp{n}{O(|\ga|+|T|)^2}
\end{align*}  

By Proposition~\ref{prop-upbound-ltl} and Lemma~\ref{lem-Gi}, solving the final $LTL$ game
$(\ga_n,\phi_n)$ takes time   $\theta_{\mbox{\sc
    ltl}}$, where:
\begin{align*}
  \theta_{\mbox{\sc ltl}}
  &=|\ga_n|^{2^{O(|\phi_n|)}}\\
  &=\itexp{1}{\itexp{n-1}{O(|\ga|+|T|)^2}*2^{O(|\phi|)}}
\end{align*}

We obtain that Algorithm~1 runs in time
$\theta=\theta_{\mbox{\sc while}}+\theta_{\mbox{\sc ltl}}$, but since (for
$n>0$), $\theta_{\mbox{\sc while}}$ is negligible compared to
$\theta_{\mbox{\sc ltl}}$, we obtain:

\begin{equation}
\label{eq-theta}
\theta =\itexp{1}{\itexp{n-1}{O(|\ga|+|T|)^2}*2^{O(|\phi|)}}
\end{equation}
Additionally, for $n=1$, the double exponential complexity stems from
the size of the formula. For $n=2$, because the size of the arena has
taken two exponentials, the double exponential complexity comes both
from the size of the formula and the size of the arena. Afterwards,
since the arena keeps growing exponentially while the size of the
formula remains the same, the complexity comes essentially from the
size of the arena. This achieves the proof of Theorem~\ref{theo-fusn}.

Note that the subroutine $LTLGameSolver(\ga_n,\phi_n)$ of Algorithm~1,
based on the automata-theoretic procedure of \cite{pnueli89b}, does not
merely decide the existence of a winning strategy, but actually builds
one (if any). Recall also that forgetful strategies are sufficient for
$LTL$ games, as they are particular cases of regular games which enjoy the
``Forgetful Determinacy'' \cite{zeitman1994unforgettable}. By the
natural bijection invoked in the proof of Proposition~\ref{prop-decr}
between strategies in a powerset arena and strategies in the original
arena, one can trace the strategy in $\ga_n$ back to a
$([T_0],\phi_0)$-fully-uniform strategy in the original game $\ga_0$.

\begin{corollary}
  Forgetful strategies are sufficient for full-uniformity properties.
\end{corollary}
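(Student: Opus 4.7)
The plan is to unfold the reduction performed by Algorithm~\ref{fig-algo-FUS} and invoke standard memoryless-determinacy results for $\omega$-regular games. Starting from $(\ga_0,T_0,\phi_0)=(\ga,T,\phi)$, the algorithm produces a finite chain $(\ga_k,T_k,\phi_k)$ ending with $\phi_n\in\ltl$. The subroutine $LTLGameSolver(\ga_n,\phi_n)$ is the automata-theoretic procedure of \cite{pnueli89b}: it constructs a deterministic Rabin automaton $\mathcal{A}_{\phi_n}$ and reduces the question to the existence of a winning strategy for Player~1 in the product Rabin game $\ga_n \times \mathcal{A}_{\phi_n}$. By Forgetful Determinacy for regular games \cite{zeitman1994unforgettable}, whenever a winning strategy exists in this Rabin game, a memoryless (forgetful) one does; equivalently, there is a winning strategy in $\ga_n$ whose choices depend only on the current position of $\ga_n$ paired with the current state of $\mathcal{A}_{\phi_n}$.

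Next I would pull this forgetful strategy back through the chain of reductions using the natural bijection established in Proposition~\ref{prop-decr}. Each application of $\pow{(\cdot)}$ yields a bijection between strategies $\sigma_k$ in $\ga_k$ and strategies $\pow{\sigma_k}=\sigma_{k+1}$ in $\ga_{k+1}$ preserving the fully-uniform property, since $\out(\pow{\sigma_k})$ is the pointwise image of $\out(\sigma_k)$. Composing these $n$ bijections produces a $([T],\phi)$-fully-uniform strategy $\sigma$ in the original arena $\ga$. The decisive point is that each bijection is defined positionwise: a finite play $\rho$ in $\ga_k$ is uniquely associated with a finite play $\pow{\rho}$ in $\ga_{k+1}$ sharing the same sequence of underlying positions, so the move chosen at $\rho$ under $\sigma_k$ is exactly the move chosen at $\pow{\rho}$ under $\sigma_{k+1}$.

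Combining these observations, the main step is to verify that forgetfulness is preserved by these pullbacks in the natural sense. A position of $\ga_{k+1}$ by construction packages the underlying position of $\ga_k$ together with the finite summary data $(S,\Last)$ sufficient to reconstruct the information set after the current play. Hence a memoryless strategy on $\ga_n\times\mathcal{A}_{\phi_n}$ translates, via the iterated bijection, into a strategy on $\ga$ whose choice after a finite play $\rho$ depends only on the deterministic summary $(\last(\rho),\mathrm{summaries},\mathrm{Rabin\ state})$ computable inductively along $\rho$. This is what it means for forgetful strategies to suffice for full-uniformity properties.

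The principal delicate point is to state precisely what \emph{forgetful} means here: the strategy is forgetful when viewed over the final arena $\ga_n \times \mathcal{A}_{\phi_n}$, but over the original arena $\ga$ it uses memory whose size is non-elementary in $|(\ga,T,\phi)|$. The corollary should therefore be read as the assertion that no additional memory beyond the iterated powerset construction plus the Rabin automaton is needed; the existence of such a strategy follows directly from Forgetful Determinacy applied at the last stage plus the preservation lemma above.
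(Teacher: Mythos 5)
Your proof is correct and takes essentially the same route as the paper: apply forgetful determinacy \cite{zeitman1994unforgettable} to the final $LTL$ game $(\ga_n,\phi_n)$ produced by Algorithm~\ref{fig-algo-FUS}, then pull the resulting finite-memory strategy back through the natural strategy bijections invoked in the proof of Proposition~\ref{prop-decr}. Your closing clarification---that ``forgetful'' here means memory bounded by the iterated powerset construction together with the deterministic automaton state, hence finite but non-elementary in $|(\ga,T,\phi)|$---is exactly the intended reading of the corollary.
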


\section{Discussion}
\label{sec-discussion}

We have investigated the concept of uniform strategies in two-player
turn-based infinite-duration games, motivated by the many instances
from the literature: games with imperfect information, games with
epistemic condition, non-interference, diagnosis and prognosis, and
Dependence Logic. Uniformity is addressed in the context of a semantic
binary relation between plays of the arena, which can arise from any
 reason to relate plays with each others, \eg an epistemic
feature.

In order to embrace all the examples we have encountered, and likely
many potential others, we were led to designing a formal language whose
sentences express the very uniformity properties of
strategies. Clearly, the language-based approach offers intuitive
definitions, while the set-theoretic one, which may capture a larger
class of uniformity properties, is much less readable. The particular
uniformity properties that have been addressed in the literature so
far (Section~\ref{sec-literature}) can now more easily be compared.
Our language is an enrichment of the Linear-time Temporal logic $LTL$
\cite{gabbay80}, hence it is interpreted over plays. The additional
feature is captured by the modality $\R$ which quantifies universally
over related plays. Whether this quantification ranges over all
plays in the arena or just over outcomes of the considered strategy
yields two variants of uniform strategies, namely fully-uniform and
strictly-uniform strategies.

The general procedure to decide the fully-uniform strategy problem is
non-elementary. This may be the price to pay for a generic solution
for arbitrarily complex uniformity properties, and we conjecture that the
fully-uniform strategy problem is non-elementary hard. However, bounding the
$\R$-depth of the formulas gives an elementary bound complexity, which
seems incidentally to be the case for all the examples of
Section~\ref{sec-literature}: only formulas whose $\R$-depth is one
are needed, so that the generic procedure has ``only'' a double
exponential time complexity. Notice that \cite{maubert2011opacity}
obtained a tighter (optimal) single exponential time bound for solving
games with opacity condition, which corresponds to the fixed formula
$\G \neg \R p_S$ of $\R$-depth equal to one and to a simple fixed
binary relation between plays (see Section~\ref{sec-opacity}). Notice
that if we fix a formula of $\R$-depth $1$, the time complexity
Equation~\eqref{eq-theta} of our procedure
collapses to a single exponential time complexity in the size of the arena and of the transducer.\\

Our results can be extended and commented in many respects. We give here some of them. \\

First, the choice we have made to rely on an enrichment of the $LTL$ logic can
be questioned -- although this logic regarding properties of time is
acknowledged in many respects. We may try to extend the synthesis
procedure to a much richer logic like the Linear-time $\mu$-calculus,
a language extending standard linear time temporal logic with fix-point
operators. But the current procedure relies on a bottom-up traversal
of the parse-tree of the formula $\phi$, which cannot be generalized
to formulas with arbitrary fix-points. The $LTL$ logic falls into
the very particular so-called \emph{alternation-free} fragment of the
Linear-time $\mu$-calculus, where fix-points do not
interplay. Significant progress in understanding this extended setting
need being pursued.\\

Second, we considered a single semantic binary relation between
plays. One may wonder whether the case of several relations
$\leadsto_i$, yielding modalities $\R_i$ at the language level, can be
investigated at the algorithmic level. We foresee a generalization of
our powerset construction by synchronizing the execution of all the
transducers of the relations. We however remain cautious regarding the
success of this approach since closely related topics such as the
\emph{decentralized} diagnosis problem is known to be undecidable
\cite{sengupta2002decentralized}. Still the question is important as
it would unify our setting with the Epistemic Temporal Logic $ETL$ of
\cite{halpern1989complexity} and bring light on the automated
verification of $ETL$ definable properties for \emph{open systems} (see the
\emph{module}-checking problem of \cite{kupferman97a}). \\

Additional comments are needed to fully understand the contribution,
in particular regarding the recent developments of alternating-time
epistemic logic \cite{van2003cooperation,jamroga2004agents,dima2010model}. The two settings are close but
incomparable. With the uniform strategy concept, we aim at extending
the range of (qualitative) properties of strategies by means of binary
relations between plays, and at exploiting those properties to
synthesize particular strategies. Instead, alternating-time epistemic
logics offer a way to quantify over strategies that achieve $ETL$-like
properties, hence they are not synthesis-oriented, and moreover, they
do not handle arbitrary relations between plays. Unifying the two
settings is a real challenge; we would need to design a (necessarily
more complex) language that incorporates the specification of the
relation(s) between plays.

\bibliographystyle{alpha}
\bibliography{games,logic,opacity,misc,strategy,classic,automata}


\end{document}